\newcommand{\bmb}[1]{\bar{\bm{#1}}}
\newcommand{\bmh}[1]{\hat{\bm{#1}}}
\newtheorem{theorem}{Theorem}
\newtheorem{lemma}{Lemma}
\newdefinition{remark}{Remark}
\newdefinition{definition}{Definition}
\newdefinition{assumption}{Assumption}
\newdefinition{motivation}{Motivation}
\newdefinition{philosophy}{Philosophy}
\newdefinition{axiom}{Axiom}
\newproof{proof}{Proof}
\begin{document}

\begin{frontmatter}

\title{Deep Pattern of Time Series and Its Applications in Estimation, Forecasting, Fault Diagnosis and Target Tracking}

%% Group authors per affiliation:

\author[NUS]{Shixiong Wang \corref{corl}}
\ead{s.wang@u.nus.edu}

\author[NUS]{Chongshou Li}
\ead{iselc@nus.edu.sg}

\author[NUS]{Andrew Lim}
\ead{isealim@nus.edu.sg}

\address[NUS]{Department of Industrial Systems Engineering and Management, National University of Singapore, Singapore 117576}

\cortext[corl]{Corresponding Author.}

\begin{abstract}
The information contained in a time series is more than what the values themselves are. In this paper, the Time-variant Local Autocorrelated Polynomial model with Kalman filter is proposed to model the underlying dynamics of a time series (or signal) and mine the deep pattern of it, except estimating the instantaneous mean function (also known as trend function), including: (1) identifying and predicting the peak and valley values of a time series; (2) reporting and forecasting the current changing pattern (increasing or decreasing pattern of the trend, and how fast it changes). We will show that it is this deep pattern that allows us to make higher-accuracy estimation and forecasting for a time series, to easily detect the anomalies (faults) of a sensor, and to track a highly-maneuvering target.
\end{abstract}

\begin{keyword}
Information Modeling \sep Time-Variant Local Autocorrelated Polynomial \sep Extrema Detecting \sep Trend Tracking \sep Kalman Filter.
\end{keyword}

\end{frontmatter}

%\linenumbers

\section{Introduction}  \label{sec:introduction}
\subsection{Subject Matter}
Data is the carrier of the information. Data mining techniques derived from information modeling is of great significance in data science, for example, in signal processing. Since signal is a time series, therefore for generality, we in the following use the terms Time Series and Signal interchangeably. The process of finding the internal mechanism/dynamics of how to generate such a signal by an information system is termed as \textbf{Information Modeling}. This paper is concerned with such modeling for a time series so that we can mine more information contained in the time series.

\subsection{Glossary}
In order not to confuse readers from different communities, we mention the difference of two terms \textit{estimation} and \textit{forecasting}. According to \cite{simon2006optimal} (see chapter 5.1), \textit{estimation} emphasizes the process of estimating the true state values from a set of noised observations, while \textit{forecasting} emphasizes the process of predicting the future values based on the historical values.

\subsection{Literature Review}
To the best knowledge of the authors, the existing literature and methodologies regarding time series analysis belongs mainly to one of the following six categories.

The first category refers to simple methods like average method, (seasonal) naive method, drift method, moving average method, exponential smoothing method, regressions on time as Holt's and Holt-Winters method, and so on \cite{hyndman2008forecasting,hyndman2018forecasting}. Those methods are theoretically easy to understand and practically convenient to adapt into many real problems with acceptable performances. Thus, they are still fashionable at present, especially in engineering.

The second category admits the prestigious Box-Jenkins \cite{box2015time} and autoregressive conditional heteroscedasticity (ARCH) \cite{engle1982autoregressive,brooks2019introductory} methodological families. The Box-Jenkins methodology is also known as ARMA and ARIMA model. The Wold's decomposition theorem \cite{hamilton1995time,papoulis2002probability} is the theoretical basis of Box-Jenkins methodology. As variants of Box-Jenkins methodology, (S)ARIMA and ARMAX models are becoming canonical \cite{hamilton1995time,hyndman2008forecasting,hyndman2018forecasting}. As complements to Box-Jenkins family, the ARCH family take into account the problem of heteroscedasticity. The ARCH family includes many member methods like GARCH \cite{bollerslev1986generalized}, NGARCH \cite{posedel2006analysis}, ZD-GARCH \cite{li2018zd} and Spatial GARCH \cite{otto2018generalised} and so on.

The third category shows interests in Spectral Analysis and Digital Filtering \cite{WANG2019ARIMA,granger1964spectral,koopmans1995spectral,bloomfield2004fourier}. Ref. \cite{granger1964spectral} first conducted the spectral analysis, also known as Fourier Frequency Domain Analysis, to time series analysis. Fourier Series Expansion which is also known as harmonic analysis is particularly popular in this category. This category also admits the Wavelet Transform \cite{daubechies1990wavelet}, Hilbert-Huang Transform \cite{huang2014hilbert} et al., which are extensions of frequency-domain analysis approaches. Notably, Hilbert-Huang transform is prestigious for its data-adaptivity presented in Empirical Mode Decomposition method, meaning the transformation basis is not fixed in advance, which is rather different from the Fourier transform and wavelet transform.

The fourth category includes the reputed Adaptive Filters, like Kalman Filter family \cite{simon2006optimal,hamilton1995time,hyndman2008forecasting}, Information Filter \cite{hyndman2008forecasting,chandra2013square} and so on.  The Kalman Filter family is mainly well studied and utilised in Signal Processing community especially like the target tracking field \cite{li2005survey}. The traditional Kalman filter aims to give the unbiased minimum variance estimates to the system states for a linear and white (uncorrelated) noise system. However, for the general time series analysis community like finance time series field, traffic volume prediction field et al., the dilemma of using the Kalman filter is that it is hard and/or even impossible to obtain the claimed State Equation, an equation analytically describing the dynamics of the focused time series. Worthy of mentioning is that Refs. \cite{hyndman2008forecasting} and \cite{durbin2012time} used the state space method to bridge the gap between Kalman filter and some time series analysis methods like local level method, exponential smoothing method, Holt's method (known as linear trend model) and the like. Note that the Kalman Filter(s) conceptually denote(s) a family of adaptive filters based on the traditional Kalman filter, like Extended Kalman Filter \cite{wang2015nonlinear}, Unscented Kalman Filter \cite{sarkka2010gaussian}, Cubature Kalman Filter \cite{wang2015nonlinear,sarkka2010gaussian}, Tobit Kalman Filter \cite{geng2019state}, and many other Improved Kalman Filters \cite{liang2004finite,sarkka2010gaussian}.

The fifth category focuses on the newly boosting Machine Learning methods, in computer science. In this category, two sub-categories should be paid attention to: (a) Kernel based method. Note that all the aforementioned four categories model the noise in a time series as a stochastic process. However, there also exist some other theories that regard a time series as a chaotic process \cite{liu2011kernel}. For the chaotic process model, the kernel based methods stand out. Those methods are like kernel adaptive filters \cite{ma2017robust}, kernel affine projection (KAP) algorithm \cite{richard2009online,liu2011kernel}, kernel recursive least squares algorithm \cite{han2018multivariate,liu2011kernel}, and kernel least mean kurtosis based method \cite{hua2013kernel}. As reported, the kernel based methods have amazing performances in predicting some complex time series. (b) Deep Learning method. Another powerful method from machine learning community to forecast a time series is Deep Learning \cite{zhu2018power,liao2018deep,siami2018forecasting}. As they stated \cite{zhu2018power,liao2018deep,siami2018forecasting,lu2017agent}, there exist many advantages of deep learning over other methods like: (i) easy to extract features of a time series and further make satisfying prediction; (ii) recurrent Neural Networks (RNN) and Long Short Time Memory Network (LSTM) have inborn power to identify and express the underlying dynamics/patterns of a time series which evolves along the time; (iii) allowing the multiple inputs and multiple outputs scenarios.

The sixth category admits the combined methods like Vector auto-regression \cite{hyndman2018forecasting}, TBATS \cite{de2011forecasting,hyndman2018forecasting}, ARMA-SIN \cite{WANG2019ARIMA} and so on. Specifically, Vector auto-regression is a special case of ARMAX model with multiple exogenous variables; TBATS is a combination of Fourier Series Expansion, Exponential Smoothing, State Space Model, and Box-Cox transformation. For more examples on this point, please see \cite{hyndman2018forecasting,wang2012stock,doya2002multiple,murray1997multiple}.
%Another worthy-mentioning combined method is presented by \cite{wang2012stock}, in which the exponential smoothing method, ARIMA model and back propagating neural network are combined together to forecast the stock market price.

\subsection{Research Gap and Main Contributions}
As we can see above, scholars in today's research only pay attention to understanding the historical values of a time series and predicting the future values based on the historical values, giving no emphasis on the changing pattern (i.e., increasing, decreasing, and how large/fast the change is) and extrema values (peaks and/or valleys) from viewpoint of modeling. The changing pattern, as well as the extrema values and where they occurs, other than the value themselves are also very important to investigate, for example, the Stock Understanding and Forecasting problem. \textit{\textbf{In other words, the information contained in a data series is more than what the values themselves of it are}}. Thus, the \textbf{Deep Patterns} of a data series should be further studied. To this end, in this paper, we will firstly model a time series (signal) as a non-stationary stochastic process presenting the properties of variant mean. Then the Time-Variant Local Autocorrelated Polynomial Model with Kalman Filter will be proposed to dynamically estimate the deep patterns of the focused time series.
\begin{remark}
We use the word \textit{deep} here for the reason that such deep information (deep pattern) of a time series is not obvious and straightforward to numerically obtain from the raw values of the time series. The mathematical essence of deep pattern will be revealed later in Subsection \ref{subsec:deep-pattern-essence}.
\end{remark}

The advantages (main contributions) of our method embody:
\begin{enumerate}[(a)]
\item \textbf{General Modeling for Kalman Filter}. The choose and use of Kalman filter requires that we know the system dynamics in advance. However, for a general signal (time series), this is impossible to satisfy,  because we have no knowledge of the system generating the focused time series. Fortunately, our Time-variant Local Autocorrelated Polynomial (TVLAP) model could help detour this dilemma;
\item \textbf{Trend Estimation}. Extracting the instantaneous mean function (trend function) of a time series, in online manner;
\item \textbf{Extrema Detecting}. Automatically identifying and predicting the peak and valley values of a time series;
\item \textbf{Trend Tracking}. Automatically reporting and forecasting the current changing pattern (increasing or decreasing pattern of the trend, and how fast it changes);
\item \textbf{Real Time}. Our method is workable for the sequential data, not just block data, as exponential smoothing and moving average method do;
\item \textbf{Theoretical sufficiency}. The reliability guarantee of our method is derived.
\end{enumerate}

In the end, we will show some potential applications of the proposed method:
\begin{enumerate}[(a)]
\item \textbf{Estimation and Forecasting}. It is the use of deep pattern that makes more satisfactory estimation and forecasting for a time series;
\item \textbf{Fault Dignosis/Anomaly Detection}. Sensors sometimes unavoidably suffer from anomalies (faults). We aim to use deep patten of the measurements to easily identify those anomalies, and decide whether the sensor is currently reliable or not;
\item \textbf{Highly-maneuvering Target Tracking}. We will show that the deep pattern of the measurements from sensors gives us the possibility to track a highly-maneuvering target.
\end{enumerate}

\section{Preliminary on Stochastic Process}\label{sec:preliminary}
%Recall that in Introduction \ref{sec:introduction} we have already point out the philosophy of ARMA and (S)ARIMA. They in fact aim to model the dynamics of a stochastic process so that we can use the past information from collected time series to satisfyingly predict the future. Mentioning this, we cannot ignore the reputed Wold's Decomposition Theorem in stochastic process analysis.
%
%Firstly we give the strict mathematical definition of the Wide-sense stationary (WSS) stochastic process.
We use $x(t)$ to denote a continuous time stochastic process and $x(n)$ a discrete time one, meaning $t = T_s n$, if the sampling time period is $T_s$.

\begin{theorem}[Wold's Decomposition Theorem \cite{papoulis2002probability}]\label{thm:wold-decom}
Any wide-sense stationary (WSS) stochastic process $x(n)$ could be decomposed into two sub-processes: (a) Regular process; and (b) Predictable process. Namely
%\begin{equation}\label{eq:wold-decom}
  $x(n) = x_r(n) + x_p(n)$,
%\end{equation}
where $x_r(n)$ is a regular process and $x_p(n)$ is a predictable process. Furthermore, the two processes are orthogonal (meaning uncorrelated): $E\{x_r(n+\tau)x_p(n)\} = 0$.
\end{theorem}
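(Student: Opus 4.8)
The plan is to work entirely in the Hilbert space $H$ of finite-variance random variables generated by the process, equipped with the inner product $\langle X, Y\rangle = E\{XY\}$; since a WSS process has constant mean, I may subtract it and assume without loss of generality that $x(n)$ has zero mean. First I would set up the nested family of closed ``past'' subspaces $\mathcal{H}_n = \overline{\operatorname{span}}\{x(k): k \le n\}$. Wide-sense stationarity guarantees that the one-step shift $x(n) \mapsto x(n+1)$ extends to a unitary operator on $H$, so the whole geometry is translation-invariant and the entire decomposition becomes a statement about orthogonal projections onto these subspaces.

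Next I would construct the \emph{innovations}. Let $\hat{x}(n\mid n-1)$ denote the orthogonal projection of $x(n)$ onto $\mathcal{H}_{n-1}$, and set $\epsilon(n) = x(n) - \hat{x}(n\mid n-1)$. By stationarity these prediction errors are mutually orthogonal and share a common variance $\sigma^2 \ge 0$ (the case $\sigma^2 = 0$ corresponds to a purely predictable process). I would then \emph{define} the regular part as the projection of $x(n)$ onto the closed span of the innovations $\{\epsilon(k): k \le n\}$, which by Bessel's inequality admits a mean-square convergent moving-average expansion $x_r(n) = \sum_{k=0}^{\infty} b_k\, \epsilon(n-k)$ with $\sum_k b_k^2 < \infty$. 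Setting the predictable part to be $x_p(n) = x(n) - x_r(n)$, I would show it lies in the remote past $\mathcal{H}_{-\infty} = \bigcap_n \mathcal{H}_n$, and hence can be reconstructed from its arbitrarily distant past with zero error, i.e.\ it is predictable.

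The orthogonality claim $E\{x_r(n+\tau)\, x_p(n)\} = 0$ then follows once I establish the key structural fact that the remote past $\mathcal{H}_{-\infty}$ is orthogonal to every innovation $\epsilon(k)$: because $x_r(n+\tau)$ lives in the innovation span while $x_p(n)$ lives in $\mathcal{H}_{-\infty}$, the two are orthogonal for all $\tau$. This amounts to verifying the orthogonal direct-sum decomposition $\mathcal{H}_n = \mathcal{H}_{-\infty} \oplus \overline{\operatorname{span}}\{\epsilon(k): k \le n\}$, which is precisely the Wold decomposition of the unitary shift restricted to the past subspaces.

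I expect the main obstacle to be exactly this last structural fact, namely proving that the regular and deterministic components genuinely separate rather than leaking into one another, together with the mean-square convergence of the moving-average series to the correct limit. Showing $\mathcal{H}_{-\infty} \perp \epsilon(k)$ for all $k$ requires a careful telescoping argument on the projection residuals, and controlling the tail of $\sum_k b_k\, \epsilon(n-k)$ rests on the summability $\sum_k b_k^2 < \infty$ inherited from Bessel's inequality. Once these two points are secured, the remaining identities, including uniqueness of the decomposition, are routine consequences of the projection theorem.
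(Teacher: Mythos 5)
The paper does not prove this theorem at all: it is imported as a classical result with a citation to Papoulis, and the surrounding text explicitly defers the definitions of regular and predictable processes to that reference. So there is no in-paper argument to compare against; the only question is whether your blind proof is sound, and it essentially is. Your outline is the standard Hilbert-space proof of Wold's theorem (and, in substance, the one found in the cited literature): center the process, use stationarity to get a unitary shift, form the innovations $\epsilon(n) = x(n) - \hat{x}(n\mid n-1)$, build $x_r(n)$ as the Bessel-convergent moving average over innovations, and place $x_p(n) = x(n) - x_r(n)$ in the remote past $\mathcal{H}_{-\infty} = \bigcap_n \mathcal{H}_n$. One small mis-location of the difficulty: the fact $\mathcal{H}_{-\infty} \perp \epsilon(k)$ is immediate, since $\mathcal{H}_{-\infty} \subseteq \mathcal{H}_{k-1}$ and $\epsilon(k) \perp \mathcal{H}_{k-1}$ by construction of the projection residual; the step that genuinely needs the telescoping identity $\mathcal{H}_n = \mathcal{H}_m \oplus \overline{\operatorname{span}}\{\epsilon(m+1),\dots,\epsilon(n)\}$ is showing that $x_p(n)$ lies in $\mathcal{H}_m$ for every $m \le n$, hence in the remote past. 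You should also note, via the unitary shift, that the coefficients $b_k = E\{x(n)\epsilon(n-k)\}/\sigma^2$ do not depend on $n$, which is what makes $x_r$ a legitimate (stationary) moving-average process; with that and the degenerate case $\sigma^2 = 0$ handled as you indicate, the decomposition and the orthogonality $E\{x_r(n+\tau)x_p(n)\} = 0$ follow exactly as you describe.
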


The detailed concepts of Regular process and Predictable process could be found in \cite{papoulis2002probability}. Intuitively, a WSS stochastic process is mathematically as $x = ARMA(p,q)$ (an autoregressive moving average process with autoregressive order of $p$ and moving average order of $q$). Thus, this theorem reveals the philosophy of Box-Jenkins methodology \cite{box2015time}.

\section{Problem Formulation and Motivations}\label{sec:formulation}
%In this section, we aim to mathematically formulate the problem concerning the issues we raised in this paper.
\subsection{Notations}
%Before proceeding to the details, we introduce mathematical notations here.
\begin{enumerate}
  \item Let $\bm v = a:l:b$ define a vector $\bm v$ being with the lower bound $a$, upper bound $b$ and step length $l$. For example, $\bm v = 0:0.1:0.5$ means $a=0$, $b=0.5$, and $l=0.1$. Thus $\bm v = [0, 0.1, 0.2, 0.3, 0.4, 0.5]^T$;
  \item Let the function $length({\bm x})$ return the length of the vector $\bm x$. For example, if $\bm x = [1,2,3]^T$, we have $length({\bm x}) = 3$;
  \item Let $\bm t$ denote the continuous time variable, and $\bm n$ its corresponding discrete time variable. For example, if $\bm t = 0:0.5:100$ (the time span is $100s$, and the sampling time is $T_s = 0.5s$), we will have $\bm n = t/{T_s} = 0:1:(length(\bm t)-1) = 0:1:200$; Let $N = length(\bm n)$. For notation simplicity, we use $T$ and $T_s$, interchangeably;
  \item Let ${\bm x}(n)$ or ${\bm x}_n$ denote the interested time series, shorted as $\bm x$; Let $\bm y$ denote the transformed time series from $\bm x$;
  %\item Let ${\bm x}_0$ denote the exact dynamics of $\bm x$, ${\bmh x}$ the estimated dynamics of $\bm x$. It means $\bm x$ is generated from its ground truth ${\bm x}_0$. The closer between the transformed (estimated) series ${\bmh x}$ and ${\bm x}_0$, the better the transform and the more proper the ARMA model could be used to refactor the dynamics of $\bmh x$;
  \item Let the function $mean(\bm x)$ return the mean of a random variable $\bm x$, and $var(\bm x)$ the variance of it. If $\bm x$ is a stochastic process, then $mean(\bm x)$ denotes the mean function and $var(\bm x)$ the variance function;
  \item Let ${\bm G}'$ denote the transpose of the matrix $\bm G$;
  \item Let the operator $ARMA(p,q|\bm {\varphi}, \bm {\theta})$ denote a ARMA process with autoregressive order of $p$ and moving average order of $q$. Besides, the coefficient vectors $\bm {\varphi}$ and $\bm {\theta}$ are for autoregressive part and moving average part, respectively. $ARMA(p,q|\bm {\varphi}, \bm {\theta})$ is shorted as $ARMA(p,q)$.
\end{enumerate}

\subsection{General Model of a Non-stationary Stochastic Process}\label{subsec:general-model-non-stationary}
In this paper, we consider a general model describing a non-stationary stochastic process with the following form
\begin{equation}\label{eq:general-model-lite}
  x(n) = f(n) + x_s(n),
\end{equation}
where $x_s(n) := ARMA(p,q)$ is a WSS stochastic process; $f(n)$ is a deterministic function denoting the mean function of the time series $x(n)$. Note that the expectation of the term $x_s(n)$ is zero. Therefore, we have $mean(x) = f$, and $var(x) = var(x_s)$.

\subsection{Deep Pattern of a Time Series}\label{subsec:deep-pattern-essence}
As a snapshot, we mention that the mathematical essences of Deep Patterns we mention in this paper are high-order derivatives of $f(n)$. Therefore, the ultimate purpose of this paper is to estimate $f(n)$ and its high-order derivatives from noised $x(n)$, \textbf{\textit{in online manner}}. Note that the changing pattern of a time series is reflected by the first-order derivative; the extrema points are reflected by the both first-order and second-order derivative. The motivation is to regress $f(n)$ with an order-sufficient polynomial, in online manner.

\subsection{Problem Statement}
In summary, $x(n)$ is our signal model; the key information contained in $x(n)$ is modelled as $f(n)$; the part $x_s(n)$ is the model of noise contained in the signal $x(n)$. We in this paper aim to recover the $f(n)$ and its high-order derivatives from noised $x(n)$.

\subsection{General Motivations}
We in Motivation \ref{motiv:general-idea-real-time} disclose the general idea for: (1) the trend estimating and tacking problem; and (2) the minima or maxima detecting ({extrema detecting}) problem.

\begin{motivation}\label{motiv:general-idea-real-time}
Any continuous function could be approximated by a polynomial function with sufficient orders. Plus, the polynomial functions are high-order differentiable, meaning we can assert a point to be a: (a) minimum if the first-order derivative is zero-valued and second-order derivative is positive, (b) maximum if the first-order derivative is zero-valued and second-order derivative is negative, at this point. Besides, the real-time changing pattern of trend could also be handled by the first-order derivative, increasing if positive, or decreasing if negative. Thus, we desire to use an order-sufficient polynomial to regress the time series of interest in online manner.
\end{motivation}

However, the dilemmas are that existing polynomial regression methods: (a) like traditional global polynomial regression only works for block data, fails to work for real-time scenario; (b) like Holt's method is order-deficiency, only holding the first-order (linear trend) polynomial which introduces potential time-delay problem when the changing rate of the focused time series is sharp. The issue of time-delay also exists in non-polynomial methods like exponential smoothing and moving average, reported by \cite{WANG2019ARIMA}. Thus we aim to in this paper introduce the Time-Variant Local Autocorrelated Polynomial Model with Kalman Filter (TVLAP-KF) to handle this issue.

As for the problem of the non-stationarity of the focused time series, we have a general idea to settle this in Motivation \ref{motiv:general-idea-regularization}.
\begin{motivation}\label{motiv:general-idea-regularization}
Facing the non-stationarity problem (\ref{eq:general-model-lite}), if we can estimate out the function $f(n)$, we could then have an estimated $\hat{x}_s (n) = [x(n) - \hat{f}(n)]$ which is considered to be a ARMA process. Here, $\hat{f}(n)$  is the estimate to $f(n)$. Then we can apply Box-Jenkins methodology to handle the stationary residual $\hat x_s(n)$, and the non-stationarity problem is solved.
\end{motivation}

%The existing methods like exponential smoothing, moving average, Holt's method, TVLAP-KF method in this paper and so on could identify and eliminate the low-frequency trend component \cite{WANG2019ARIMA} of a time series, to some degree, in real time. Thus it seems not challenging to estimate the $f(n)$.

%\begin{remark}\label{rem:validity-ETS-MA}
%Note that the theoretical validity and improvements of exponential smoothing method and moving average method in detrending have been proved by \cite{WANG2019ARIMA}.
%\end{remark}
%
%\begin{remark}\label{rem:if-concern-periodicity}
%If one also has interests in periodicity, meaning he concerns the model (\ref{eq:general-model-trans}) rather than (\ref{eq:general-model-lite}), he can use the Spectral Analysis and Digital Filtering methodology (the SIN part of ARMA-SIN methodology) to identify and extract the periodic component of the interested time series. After this, he can continue to implement the ideas in Motivation \ref{motiv:general-idea-real-time} and Motivation \ref{motiv:general-idea-regularization}.
%\end{remark}

\section{Trend Tracking and Extrema Detecting}\label{sec:polynomial-model}
\subsection{Time-Variant Local Autocorrelated Polynomial Model}
In this section, we will introduce the Time-Variant Local Autocorrelated Polynomial (TVLAP) Model with Kalman Filter to handle the online polynomial regression problem (mean estimation) and extrema detection problem. As a demonstration, we in this section only takes the special case of (\ref{eq:general-model-lite}) as $x(n) = f(n) + white(n)$, where $white(n)$ denotes a white-noise (uncorrelated) series. Plus, readers are invited to refer to \cite{simon2006optimal} (see chapter 5.1) for profoundly understanding about Kalman filter, including the estimation method and the prediction method. In consideration of paper length and necessity, we will not introduce more about the theory of Kalman filter.

As we state in Introduction, the Kalman filter is powerful only when the required state equation (also known as system equation, system dynamics equation, or transfer function in state space et al. in control theory and signal processing community) is known. This limits the wide utilization of Kalman filter in time series analysis and signal processing, because generally we cannot know the explicit dynamics (state equation) of an information system which generates the focused time series, unlike many problems in control theory and signal processing. However, the Kalman filter is extremely attractive for us due to that: (a) it is an online algorithm with low computational complexity; (b) it is an optimal linear estimation method in linear-system and white-noise sense. Since we are concerned with the trend estimating and trend tracking problem, and the mean of a time series is generally low-frequency, why not to use an order-sufficient polynomial to refactor the dynamics of the mean of the time series, namely, the $f(n)$ part in (\ref{eq:general-model-lite})? That is to say, why do not we model the information dynamics of the time series?

The theoretical validity and sufficiency of polynomial regression is from the prestigious Weierstrass approximation theorem \cite{kreyszig1978introductory}. However, the dilemma is the real-time and extrema detecting issues, meaning we expect the algorithm to be able to not only work online but also simultaneously return the current first-order and second-order (or even higher-orders) derivatives of such a well-approximated polynomial. Well-approximation here means the regressed polynomial and the mean function of the raw time series are close enough.

Fortunately, the dilemma is possible to detour when we ask for help from the Taylor's expansion, asserting that a real-valued function $f(t)$ that is infinitely differentiable at a real number $t_0$ could be the power series with the form of %(\ref{eq:taylor-expansion}),
%\begin{figure*}[htp]
%\normalsize
\begin{equation}\label{eq:taylor-expansion}
  f(t) = f(t_0) + \frac{f^{(1)}(t_0)}{1!} (t - t_0) + ... + \frac{f^{(k)}(t_0)}{k!} (t - t_0)^k + ...,
\end{equation}
%    \begin{equation}\label{eq:taylor-expansion-discrete-time}
%      f(t) = f(n) + \frac{f^{(1)}(n)}{1!} (t - n) + \frac{f^{(2)}(n)}{2!} (t - n)^2 + ... + \frac{f^{(k)}(n)}{k!} (t - n)^k + ....
%    \end{equation}
%\hrulefill
%\end{figure*}
where $f^{(k)}(t_0)$ denotes the $k^{th}$-order derivative of $f(t)$ at $t_0$.

Note that (\ref{eq:taylor-expansion}) is also a polynomial with a special mathematical form rather than its general form below
\begin{equation}\label{eq:general-polynomial}
  f(t) = f_0 + f_1 t +  f_2 t^2 + ... + f_k t^k + ...,
\end{equation}
where $f_k$ are constant coefficients.

However, a function could be expanded as Taylor's series if and only if it is infinitely smooth, meaning infinitely differentiable. Thus we cannot directly apply the Taylor's series expansion over a general time series whose trend function $f(t)$ may be discontinuous in (high-order) derivatives. To overcome this, we introduce an intermediate (temporary) function $p(t)$ as the Weierstrass approximation of $f(t)$. It means $p(t)$ is a polynomial with proper orders. Thus, we have $\forall \varepsilon >0 $, $\exists \bar K  > 0$, such that
\begin{equation}\label{eq:Weierstrass-approximation}
  \displaystyle \sup_{t}|f(t)-p_{\bar K}(t)| < \varepsilon,
\end{equation}
over a compact space, where $p_{\bar K}(t)$ denotes the polynomial with order of $\bar K$. For simplicity, we ignore $\bar K$ in notation. We have
%\begin{equation}\label{eq:general-polynomial-with_p}
  $p(t) = p_0 + p_1 t +  p_2 t^2 + ... + p_k t^k + ...$.
%\end{equation}

Thus, when we have a time series $x(n)$, we could alternatively choose the polynomial in Taylor's form to regress $p(t)$ instead of $f(t)$ because only $p(t)$ is guaranteed to be infinitely differentiable. This will not lead to disaster, according to (\ref{eq:Weierstrass-approximation}). Suppose we have interests in the properties at the discrete time index $n$, Eq. (\ref{eq:taylor-expansion}) could then be rewritten as (\ref{eq:taylor-expansion-discrete-time}).
\begin{equation}\label{eq:taylor-expansion-discrete-time}
  p(t) = p(n) + \frac{p^{(1)}(n)}{1!} (t - n) + ... + \frac{p^{(k)}(n)}{k!} (t - n)^k + ....
\end{equation}
%\begin{figure*}[htp]
%\normalsize
%    \begin{equation}\label{eq:taylor-expansion-discrete-time}
%      f(t) = f(n) + \frac{f^{(1)}(n)}{1!} (t - n) + \frac{f^{(2)}(n)}{2!} (t - n)^2 + ... + \frac{f^{(k)}(n)}{k!} (t - n)^k + ....
%    \end{equation}
%\hrulefill
%\end{figure*}

Thus, the traditional polynomial regression (\ref{eq:general-polynomial}) could be regarded as the special case of (\ref{eq:taylor-expansion}) when we investigate the problem from the starting point of the time, namely, $t_0 = 0$. In other words, the polynomial (\ref{eq:taylor-expansion-discrete-time}) is a local polynomial, while (\ref{eq:general-polynomial}) is a global polynomial. For intuitive understanding, see Fig. \ref{fig:local-global-polynomial}.

\begin{figure}[htbp]
    \centering
    \subfigure[Global polynomial]{
        \begin{minipage}[htbp]{0.46\linewidth}
            \centering
            \includegraphics[height=4.5cm]{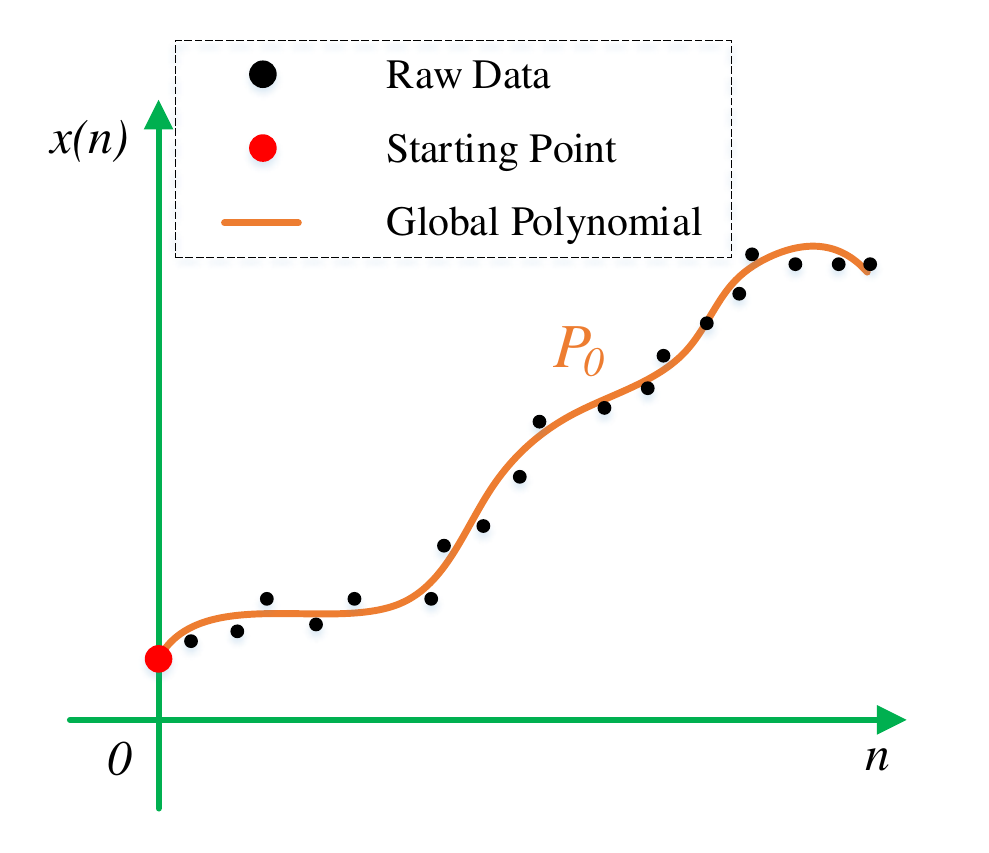}
        \end{minipage}
    }
    \subfigure[Local polynomial]{
        \begin{minipage}[htbp]{0.46\linewidth}
            \centering
            \includegraphics[height=4.5cm]{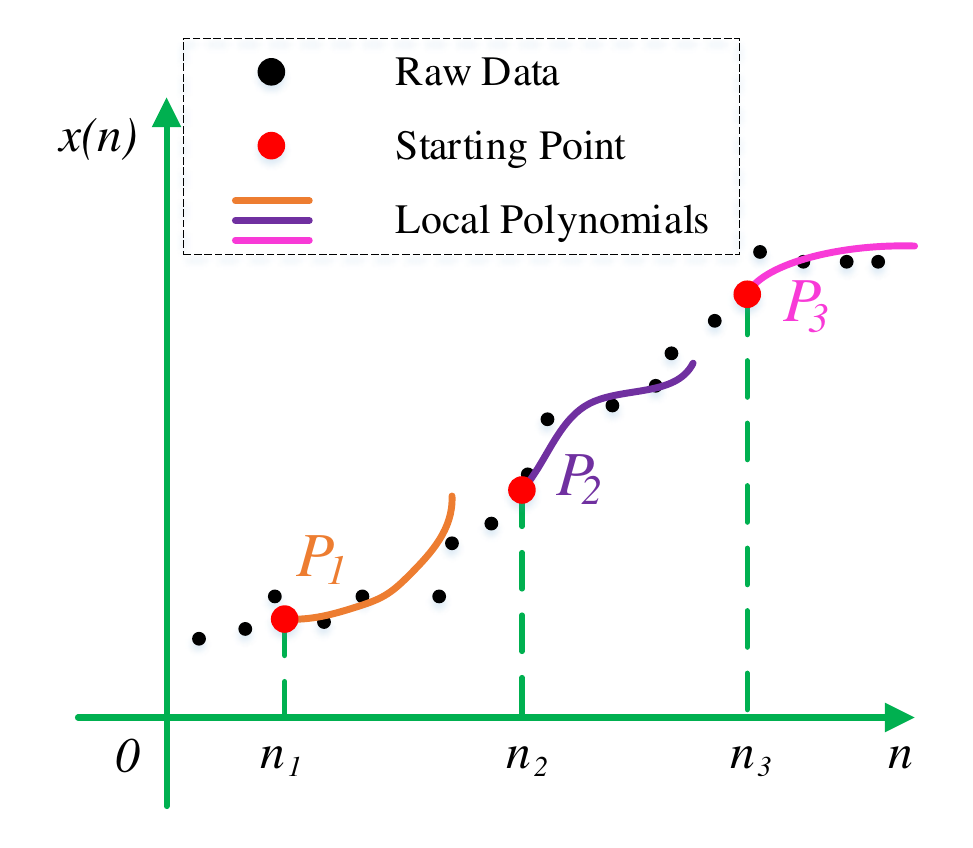}
        \end{minipage}
    }
    \caption{Global and local polynomial}
    \label{fig:local-global-polynomial}
\end{figure}

If we only pay attention to the case of $t = n+1$ and truncate the polynomial on the order of $K$, we have (\ref{eq:taylor-expansion-discrete-time}) as
\begin{equation}\label{eq:state-equation}
    %\begin{array}{cllll}
        p(n+1) %&= f(n) + \displaystyle \frac{f^{(1)}(n)}{1!} T + ... + \displaystyle \frac{f^{(K)}(n)}{K!} T^K \\
               = \displaystyle \sum_{k = 0}^{K} \frac{p^{(k)}(n)}{k!} T^k = \displaystyle \sum_{k = 0}^{K} \frac{T^k}{k!} p^{(k)}(n),
    %\end{array}
\end{equation}
where $T$ denotes the time slot between the discrete time index $n+1$ and $n$.

Interestingly, Eq. (\ref{eq:state-equation}) holds the following powerful characteristics:
\begin{enumerate}[(a)]
  \item It is actually the required State Equation in Kalman filter for a general time series. Note that the nature of the state equation is the recursive relationship of a time-related function from the former discrete time index $n$ to the latter $n+1$;
  \item It conveys the high-order derivatives up to the order of $K^{th}$ of the function $p(t)$, which is attractive in extrema detection and other analysis.
\end{enumerate}

Now it is possible for us to apply the Kalman filter as long as we could have the state space representation of (\ref{eq:state-equation}). In consideration of the fact that the terms $p^{(k)}(n),~k = 0,1,2,3,...,K$ actually change through the time and convey the explicit physical meanings of $p(n)$ (the complete changing patterns), we could choose them as our state variables. Note that only a variable instead of a constant could be considered as a state variable. Thus we define our state vector as
\begin{equation}\label{eq:state-vector}
    \bm X(n) :=  \left[
                    \begin{array}{l}
                        X_0(n) \\
                        X_1(n) \\
                        X_2(n) \\
                        \cdots \\
                        X_K(n)
                    \end{array}
                \right]
             :=  \left[
                    \begin{array}{l}
                        p^{(0)}(n) \\
                        p^{(1)}(n) \\
                        p^{(2)}(n) \\
                        \cdots \\
                        p^{(K)}(n)
                    \end{array}
                \right],
\end{equation}
meaning the first entry is the real-time value of $p(n)$ and the rest entries are the real-time values of the high-order derivatives of $p(n)$.

Consequently, we have the state space representation of (\ref{eq:state-equation}) as
\begin{equation}\label{eq:system-equation}
    \bm X(n+1) =  \left[
                    \begin{array}{*{45}{c}}
                        {1}&{T}&{\frac{T^2}{2}}&{\cdots}&{\frac{T^K}{K!}}\\
                        {0}&{1}&{T}&{\cdots}&{\frac{T^{K-1}}{(K-1)!}}\\
                        {0}&{0}&{1}&{\cdots}&{\frac{T^{K-2}}{(K-2)!}}\\
                        {\vdots}&{\vdots}&{\vdots}&{\ddots}&{\vdots}\\
                        {0}&{0}&{0}&{\cdots}&{1}
                    \end{array}
                \right]
                \bm X(n).
\end{equation}

Eq. (\ref{eq:system-equation}) implies that when we model the dynamics of $f(n)$, we actually admit the $K^{th}$-order derivative to remain constant over time. This is therefore the cost of truncation, meaning we just use the Level Model ($0$-order holder, $0$-order local polynomial) to smooth the $K^{th}$-order derivative. To be more specific, Level model models the slow-changing pattern of the $K^{th}$-order derivative rather than fast-changing pattern. For more on this point, see the philosophy, implementation and performances of Level model introduced in \cite{durbin2012time}.

In time series analysis scenario, only the sequential data $x(n)$ is obtainable, observable. Therefore, we in our state space adaptation should define the measure vector as
\begin{equation}\label{eq:measure-vector}
\begin{array}{rll}
    \bm Y(n) &= x(n) = f(n) + x_s(n) = p(n) + x_s(n) \\
            &:= f(n) + white(n).
\end{array}
\end{equation}

By doing so, we have the Measurement Equation (also known as Observation Equation, or Output Equation) as
\begin{equation}\label{eq:measure-equation}
    \bm Y(n) :=  \left[
                    \begin{array}{cccccc}
                        {1}&{0}&{0}&{\cdots}&{0}
                    \end{array}
                \right] \bm X(n) + \bm V(n),
\end{equation}
where $\bm V(n)$ is used to model the $white(n)$ part (in general, the $x_s(n)$).

Besides, let's define
\begin{equation}\label{eq:system-matrix}
    \bm \Phi :=  \left[
                    \begin{array}{*{45}{c}}
                        {1}&{T}&{\frac{T^2}{2}}&{\cdots}&{\frac{T^K}{K!}}\\
                        {0}&{1}&{T}&{\cdots}&{\frac{T^{K-1}}{(K-1)!}}\\
                        {0}&{0}&{1}&{\cdots}&{\frac{T^{K-2}}{(K-2)!}}\\
                        {\vdots}&{\vdots}&{\vdots}&{\ddots}&{\vdots}\\
                        {0}&{0}&{0}&{\cdots}&{1}
                    \end{array}
                \right],
\end{equation}
as our System Matrix in Kalman filter, and
\begin{equation}\label{eq:measure-matrix}
    \bm H :=  \left[
                    \begin{array}{cccccc}
                        {1}&{0}&{0}&{\cdots}&{0}
                    \end{array}
                \right],
\end{equation}
as our Measurement Matrix. Note that the $\bm \Phi$ and $\bm H$ are constant if given the order $K$. Suppose the state noise vector is $\bm W (n)$ with covariance $\bm Q(n)$ and its noise-driven matrix is $\bm G$; the measurement noise state vector is $\bm V(n)$ with covariance $\bm R(n)$, we then have a state-space model as a stochastic process to a time series $x(n) = p(n) + white(n) = f(n) + white(n)$ as
\begin{equation}\label{eq:linear-system-x}
    \left\{ \begin{array}{rll}
        \bm X(n+1) &= \bm \Phi \bm X(n) + \bm G \bm W(n) \\
        \bm Y(n) &= \bm H \bm X(n) + \bm V(n),
    \end{array} \right.
\end{equation}
where $\bm W(n)$ denotes the modeling error.
%Specifically, it models the difference between (\ref{eq:system-equation}) and (\ref{eq:general-polynomial-with_p}), that is, between (\ref{eq:state-equation}) and (\ref{eq:general-polynomial-with_p}).

Eq. (\ref{eq:linear-system-x}) is the linear system model of $x(n)$ in state space. Note that the mathematical form of $\bm G$ is not unique, meaning we can define it as any proper one. Some simple examples are: (a) $\bm G = [\frac{T^K}{K!},...,T,1]'$ so that $\bm W(n)$ should be a $1$-dimensional vector denoting the disturbance exerted to $X_K(n)$; (b) $\bm G = diag\{\frac{T^K}{K!},...,T,1\}$ so that $\bm W(n)$ should be a $(K+1)$-dimensional vector denoting the disturbance exerted to $\bm X(n)$; (c) $\bm G$ as an identity matrix so that $\bm W(n)$ should be a $(K+1)$-dimensional vector denoting the disturbance exerted to $\bm X(n)$. The difference between (b) and (c) is reflected in their corresponding $\bm Q(n)$.

For our model (\ref{eq:general-model-lite}) studied in this paper, its noise part $x_s(n)$ is stationary, meaning $\bm R(n)$ is constant over time. Let $\bm R := \bm R(n)$.
\begin{remark}\label{rem:enveloped-noise}
  If we allow the heteroscedasticity in noise, we could re-model (\ref{eq:general-model-lite}) as
  %\begin{equation}\label{eq:enveloped-model}
    $x(n) = f(n) + g(n)x_s(n)$,
  %\end{equation}
  where $g(n)$ is another deterministic function. However, this problem is more complicated to handle. Therefore, we leave it as an open problem.
\end{remark}

\subsection{Estimate the \textbf{R} and \textbf{Q}(n)}\label{subsec:estimate-R}
Actually, $\bm R$ is easy to estimate from the historical observations (measures) of $x(\bar n)$. Here $\bar n$ is used to differentiate from $n$, meaning $x(\bar n)$ could be any segment of $x(n)$ in the past, just as ground truth to estimate $\bm R$. Suppose we use the traditional global polynomial $\bar p(t)$ to fit $x(\bar n)$, we should have the fitting residual $\delta$ as
  %\begin{equation}\label{eq:global-poly-fit-residual}
    $\delta(\bar n) := x(\bar n) - \bar p(\bar n)$.
  %\end{equation}
According to our model assumption, $\delta(\bar n)$ should be a wide-sense stationary (WSS) stochastic process, meaning the selected order of $\bar p(t)$ is proper if and only if $\delta(\bar n)$ is wide-sense stationary.

Thus we have
%\begin{equation}\label{eq:estimate-measure-covariance}
$R := var(\delta)$.
%\end{equation}
Note that in (\ref{eq:linear-system-x}), $\bm V$ is 1-dimensional, meaning $\bm R$ is also a scalar rather than a vector, denoted as $R$.

As for the real-time, adaptive estimate to $\bm Q(n)$ when given (\ref{eq:linear-system-x}) and $\bm R$, it is a bit more complicated. Readers are invited to refer to \cite{moghe2019adaptive,mehra1972approaches,mohamed1999adaptive} and other similar literature concerning the process covariance estimation problem. Actually, according to some optimal filtering techniques treating $\bm Q(n)$ as unknown disturbances or unknown inputs \cite{liang2004finite,myers1976adaptive,liang2008adaptive,xia1994adaptive}, it is not always necessary for us to estimate $\bm Q(n)$, because we are only concerned with the optimal estimate of $\bm X(n)$.
\begin{remark}
 In practice, at times there is no need to pursue the exactly true value of $\bm Q(n)$. Engineers could try different $\bm Q(n)$ to obtain different estimation performances. Note that the value of $\bm Q(n)$ actually adjust our trust level towards the system model we use \cite{simon2006optimal}.
\end{remark}

\subsection{TVLAP Model with Kalman Filter}
Now, it is sufficient to use the Kalman filter to handle the linear system (\ref{eq:linear-system-x}), during which we could also estimate the real-time value $\hat{X}_0$ of $p(n)$, and real-time values of $k^{th}$-order derivative $\hat{X}_k$ of $p(n)$, where $p(n)$ is the mean function of the focused time series $x(n)$. The estimates to derivatives admit the feasibility of extrema detecting and the changing-pattern prediction (to predict the increasing pattern or decreasing pattern).

We in this paper term the presented method as Time-Variant Local Autocorrelated Polynomial Model (TVLAP) with Kalman Filter, shorted as TVLAP-KF. \textit{Time-Variant} means the coefficients of the used polynomial model (\ref{eq:state-equation}), namely $p^{(k)}(n)/{k!}$ and $\hat{X}_k(n)$, change over time. The meaning of the word \textit{Local} has been explained earlier in Fig. \ref{fig:local-global-polynomial}. \textit{Autocorrelated} means the coefficients of the used polynomial are not independent, are instead highly related, because we have
\begin{equation}\label{eq:autocorrelated-coefficients}
    \displaystyle \frac{p^{(k)}(n)}{k!} = \displaystyle \frac{d\left[\displaystyle \frac{p^{(k+1)}(n)}{(k+1)!}\right]}{dt}.
\end{equation}

Finally we present the entire algorithm of online trend tracking and extrema detecting in Algorithm \ref{algo:Mean-Extrema-Estimating}.
\begin{algorithm}[htbp] \label{algo:Mean-Extrema-Estimating}
	\caption{Online Trend Tracking and Extrema Detecting for a Variant Mean and White Time Series}
	\begin{flushleft}
        \textbf{Definition}: $\bm P$ as state estimate covariance in Kalman filter; $\bm I$ as identity matrix with proper dimension; $\infty$ as a big number; $\epsilon$ as a small number; $\text{abs}(x)$ as the absolute function which return the absolute value of a real number; $\emptyset$ as an empty set

        \textbf{Reservation}: Set $\mathbbm{E}_m$ to record minima, and Set $\mathbbm{E}^m$ to record maxima

    	\textbf{Initialize}: $\infty \leftarrow 10^5$, $\epsilon \leftarrow 10^{-6}$, $\bm X \leftarrow \bm 0$, $\bm P \leftarrow \infty \times \bm I$, $\bm Q$, $\bm R$, $\mathbbm{E}_m \leftarrow  \emptyset$, $\mathbbm{E}^m \leftarrow  \emptyset$
    \end{flushleft}
	\begin{algorithmic}[1]
		\Require $x(n)$ , $n=0,1,2,3,...$
		\While {true}
    		\State $n \leftarrow n + 1$
            \State // Mean Estimating and Trend Tracking
    		\State $\bmh X (n)$ = Kalman\_Filter[$x(n)$]~~~~// See \cite{simon2006optimal} (Chapter 5.1)
            \State $\hat{f}(n) \leftarrow \hat{X}_0(n)$
    		\State // Extrema Detecting
            \If { $\text{abs}(\hat{X}_1(n)) < \epsilon$  and $\hat{X}_2(n) > 0$}
                \State $\mathbbm{E}_m \leftarrow \{n\} \cup \mathbbm{E}_m$ ~~~~// Minimum reached
            \ElsIf { $\text{abs}(\hat{X}_1(n)) < \epsilon$  and $\hat{X}_2(n) < 0$}
                \State $\mathbbm{E}^m \leftarrow \{n\} \cup \mathbbm{E}^m$ ~~~~// Maximum reached
            \EndIf
            \\
    		\If {end of getting $x(n)$}
    		  \State Break while
    		\EndIf
		\EndWhile
        \Ensure estimated mean $\hat{f}(n)$; minima set $\mathbbm{E}_m$; maxima set $\mathbbm{E}^m$
	\end{algorithmic}
\end{algorithm}

\begin{remark}\label{rem:TVLAP-and-others-time-series}
The Level model and Holt's method (also known as Linear Trend model) mentioned in \cite{durbin2012time} are special cases of TVLAP. When $K = 0$, TVLAP becomes the recursive-form Level model. If $K = 1$, TVLAP degenerates into the recursive-form Holt's method.
\end{remark}

\begin{remark}\label{rem:TVLAP-and-others-target-tracking}
In target tracking community \cite{li2003survey}, one branch of signal processing problem, the canonical Static model, Constant Velocity (CV) model, and Constant Acceleration (CA) model are special cases of TVLAP. When $K = 0$, TVLAP gives the Static model. If $K = 1$, we have the CV model. If $K = 2$, TVLAP degenerates into the CA model.
\end{remark}
%\begin{remark}\label{rem:risk-bound}
%  Note that the {\textbf{\textit{RISK BOUND}}} of the Algorithm \ref{algo:Mean-Extrema-Estimating} is revealed by the covariance matrix $\bm P$. This is easy to understand from the theory of Kalman filter \cite{simon2006optimal}.
%\end{remark}

\subsection{Risk Bound of TVLAP-KF for Estimation and Forecasting}
Note that the risk bound of TVLAP-KF for estimating and forecasting is given by estimation error covariance ($\bm P_{n|n}$, $n$ is current discrete time index) and forecasting error covariance ($\bm P_{n+k|n}$, $k$ is future forecasting steps) returned by Kalman filter.

\subsection{Reliability Guarantee of TVLAP-KF}
In this section, we are concerned to analyze the performances of the proposed TVLAP-KF. That is, we need to investigate whether the TVLAP-KF could recursively approximate $p(t)$ and its derivatives defined in (\ref{eq:taylor-expansion-discrete-time}) with satisfying accuracy.

Before we start, we first give two definitions regrading the observability and contractility of a linear time-invariant system.
\begin{definition}\label{def:observability}
  The linear time-invariant system defined as (\ref{eq:linear-system-x}) is uniformly completely observable if the matrix $\bm O$ defined by the matrices pair $[\bm \Phi, \bm H]$:
  \begin{equation}
    \bm O = [\bm H', \bm \Phi' \bm H', ..., (\bm \Phi')^{K} \bm H']',
  \end{equation}
  is of full rank.
\end{definition}

\begin{definition}\label{def:controbility}
  The linear time-invariant system defined as (\ref{eq:linear-system-x}) is uniformly completely controllable if the matrix $\bm C$ defined by the matrices pair $[\bm \Phi, \bm G]$:
  \begin{equation}
    \bm C = [\bm G, \bm \Phi \bm G, ..., \bm \Phi^{K} \bm G],
  \end{equation}
  is of full rank.
\end{definition}

As we can see, in order to calculate the rank of the matrices $\bm O$ and $\bm C$, we must cope with the calculation of the power of the matrix $\bm \Phi$, specifically, $\bm \Phi^K (T)$. According to the speciality of our defined $\bm \Phi$, we have Lemma \ref{lmm:decomposition-Phi}.

\begin{lemma}\label{lmm:decomposition-Phi}
%\begin{equation}
  $\bm \Phi^K (T) = \bm \Phi (KT)$.
%\end{equation}
\end{lemma}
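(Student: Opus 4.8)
The plan is to prove a stronger \emph{semigroup identity}, namely $\bm \Phi(s)\bm \Phi(t) = \bm \Phi(s+t)$ for arbitrary time slots $s$ and $t$, and then to iterate it. Once this is in hand, the lemma follows immediately by induction on $K$: the base case $\bm \Phi^1(T) = \bm \Phi(T)$ is trivial, and writing $\bm \Phi^K(T) = \bm \Phi(T)\,\bm \Phi^{K-1}(T)$ together with the inductive hypothesis $\bm \Phi^{K-1}(T) = \bm \Phi((K-1)T)$ gives $\bm \Phi(T)\,\bm \Phi((K-1)T) = \bm \Phi(T + (K-1)T) = \bm \Phi(KT)$.

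To establish the semigroup identity I would index the rows and columns of the $(K+1)\times(K+1)$ matrix by $i,j \in \{0,1,\dots,K\}$ in agreement with the state vector (\ref{eq:state-vector}), so that $[\bm \Phi(T)]_{ij} = T^{j-i}/(j-i)!$ for $j \ge i$ and $0$ otherwise. Because both factors are upper triangular, the entrywise product collapses to a sum over $l \in \{i,\dots,j\}$, giving for $j \ge i$
\begin{equation}
[\bm \Phi(s)\bm \Phi(t)]_{ij} = \sum_{l=i}^{j} \frac{s^{l-i}}{(l-i)!}\,\frac{t^{j-l}}{(j-l)!}.
\end{equation}
Putting $m := j-i$ and re-indexing by $a := l-i$, this is $\frac{1}{m!}\sum_{a=0}^{m}\binom{m}{a}s^{a}t^{m-a} = (s+t)^{m}/m!$ by the binomial theorem, which is exactly $[\bm \Phi(s+t)]_{ij}$; the strictly-lower entries vanish on both sides. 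Hence $\bm \Phi(s)\bm \Phi(t) = \bm \Phi(s+t)$, and specializing to $K$ equal copies $s = t = \cdots = T$ already yields $\bm \Phi^K(T) = \bm \Phi(KT)$.

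A more conceptual alternative, which I would mention as a remark, is to observe that $\bm \Phi(T) = \exp(\bm A T)$ where $\bm A$ carries $1$ on its first superdiagonal and $0$ elsewhere: since $\bm A^{m}$ is a single band at offset $m$ and $\bm A^{K+1} = \bm 0$, the truncated series $\sum_{m=0}^{K}(\bm A T)^{m}/m!$ reproduces the entries of $\bm \Phi(T)$. The identity then reduces to $\exp(\bm A s)\exp(\bm A t) = \exp(\bm A (s+t))$, which holds because the two factors are functions of the same matrix and therefore commute.

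There is no genuinely hard step here: the entire content is the Vandermonde-type convolution of Taylor coefficients, which collapses to the binomial theorem. The only point requiring mild care is the index bookkeeping, i.e.\ verifying that the upper-triangular support of $\bm \Phi$ forces the summation to range exactly over $l \in \{i,\dots,j\}$, so that the resulting band at offset $j-i$ matches the single coefficient $(s+t)^{j-i}/(j-i)!$ in $\bm \Phi(s+t)$. Once the conventions are pinned down consistently, the identity is immediate.
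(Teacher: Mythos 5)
Your proposal is correct, but your primary argument takes a genuinely different route from the paper; in fact, the paper's proof is exactly the one you relegate to a closing remark. The paper simply asserts the existence of the nilpotent upper-shift matrix $\bm A$ with $\bm \Phi(T) = e^{\bm A T}$ and concludes $\bm \Phi^K(T) = e^{K \bm A T} = \bm \Phi(KT)$ in one line, without verifying the exponential representation. Your main proof instead computes the product entrywise: using the upper-triangular support $[\bm \Phi(T)]_{ij} = T^{j-i}/(j-i)!$, the convolution sum collapses via the binomial theorem to $(s+t)^{j-i}/(j-i)!$, establishing the stronger two-parameter semigroup law $\bm \Phi(s)\bm \Phi(t) = \bm \Phi(s+t)$ for arbitrary $s,t$, from which the lemma follows by induction. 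What each approach buys: the paper's exponential argument is shorter and more conceptual, but its key identification $\bm \Phi(T) = e^{\bm A T}$ is left unproved, and supplying that proof (noting $\bm A^{m}$ is the band at offset $m$ and $\bm A^{K+1} = \bm 0$, so the series truncates to exactly the entries of $\bm \Phi$) is precisely the content of your remark; your entrywise computation, by contrast, is self-contained and elementary, and yields the more general identity $\bm \Phi(s)\bm \Phi(t) = \bm \Phi(s+t)$ rather than only integer powers of a fixed $\bm \Phi(T)$. Either route is rigorous; yours is the more complete as a standalone proof.
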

\begin{proof}
Actually, there exists a matrix
\begin{equation}
    \bm A =  \left[
                    \begin{array}{*{45}{c}}
                        {0}&{1}&{0}&{\cdots}&{0}&{0}\\
                        {0}&{0}&{1}&{\cdots}&{0}&{0}\\
                        %{0}&{0}&{0}&{\cdots}&{0}&{0}\\
                        {\vdots}&{\vdots}&{\vdots}&{\ddots}&{\vdots}&{\vdots}\\
                        {0}&{0}&{0}&{\cdots}&{0}&{1}\\
                        {0}&{0}&{0}&{\cdots}&{0}&{0}
                    \end{array}
                \right],
\end{equation}
such that
%\begin{equation}\label{eq:decomposition-Phi}
    $\bm \Phi(T) = e^{\bm AT}$.
%\end{equation}
Thus, $\bm \Phi^K (T) = e^{K\bm AT} = \bm \Phi (KT)$. That is,
\begin{equation}
    \bm \Phi^K(T) =  \left[
                    \begin{array}{*{45}{c}}
                        {1}&{KT}&{\frac{(KT)^2}{2}}&{\cdots}&{\frac{(KT)^K}{K!}}\\
                        {0}&{1}&{KT}&{\cdots}&{\frac{(KT)^{K-1}}{(K-1)!}}\\
                        {0}&{0}&{1}&{\cdots}&{\frac{(KT)^{K-2}}{(K-2)!}}\\
                        {\vdots}&{\vdots}&{\vdots}&{\ddots}&{\vdots}\\
                        {0}&{0}&{0}&{\cdots}&{1}
                    \end{array}
                \right].
\end{equation}
%Note that not all the transition matrix $\bm \Phi$ of a general linear time-invariant system can be represented as (\ref{eq:decomposition-Phi}).
\end{proof}

\begin{lemma}\label{lmm:vandermonde-matrix}
The Vandermonde matrix defined as
\begin{equation}
    V=\left[\begin{array}{*{45}{c}}
    1 & \alpha_1 & \alpha_1^2 & \dots & \alpha_1^{n-1}\\
    1 & \alpha_2 & \alpha_2^2 & \dots & \alpha_2^{n-1}\\
    1 & \alpha_3 & \alpha_3^2 & \dots & \alpha_3^{n-1}\\
    \vdots & \vdots & \vdots & \ddots &\vdots \\
    1 & \alpha_m & \alpha_m^2 & \dots & \alpha_m^{n-1}
    \end{array}\right],
\end{equation}
is of full rank if $\forall i \ne j$, we have $\alpha_j \ne \alpha_i$.
\end{lemma}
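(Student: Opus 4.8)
The plan is to reduce the rectangular full-rank claim to the classical determinant identity for square Vandermonde matrices. First I would treat the square case $m=n$ and show
$$
\det V = \prod_{1 \le i < j \le n} (\alpha_j - \alpha_i),
$$
which is nonzero precisely when the nodes $\alpha_1, \dots, \alpha_n$ are pairwise distinct; hence a square $V$ built from distinct nodes is invertible and therefore of full rank.

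To establish the determinant identity I would argue by induction on $n$. The base case $n=1$ is immediate, since $\det V = 1$ and the product over $i<j$ is empty (equal to $1$). For the inductive step I would perform elementary column operations that leave the determinant unchanged: working from the last column backward, subtract $\alpha_1$ times column $k-1$ from column $k$. Each entry $\alpha_i^{k-1}$ becomes $\alpha_i^{k-2}(\alpha_i - \alpha_1)$, so the first row collapses to $(1,0,\dots,0)$. Expanding along that first row leaves an $(n-1)\times(n-1)$ block in rows $2,\dots,n$; pulling the common factor $(\alpha_i - \alpha_1)$ out of each surviving row $i$ exhibits exactly the Vandermonde matrix in the nodes $\alpha_2, \dots, \alpha_n$, to which the induction hypothesis applies. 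Collecting $\prod_{i=2}^{n}(\alpha_i - \alpha_1)$ with the inductive product yields the full formula. (An equivalent route views $\det V$ as a polynomial in $\alpha_n$ that vanishes at each $\alpha_n = \alpha_i$, $i<n$, and matches leading coefficients, but the column-reduction argument keeps the bookkeeping explicit.)

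Finally I would handle the genuinely rectangular cases by extracting a square Vandermonde submatrix of the right size. If $m \le n$, the first $m$ columns form an $m\times m$ Vandermonde matrix in the distinct nodes $\alpha_1,\dots,\alpha_m$, invertible by the square case, so $V$ has $m$ independent rows. If $m \ge n$, the first $n$ rows form an $n\times n$ Vandermonde matrix in the distinct nodes $\alpha_1,\dots,\alpha_n$, again invertible, so $V$ has $n$ independent columns. In either case $\operatorname{rank} V = \min(m,n)$, i.e.\ $V$ is of full rank.

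I expect the only delicate point to be the inductive bookkeeping in the column-reduction step: correctly tracking which factor $(\alpha_i - \alpha_1)$ is pulled from which row and verifying that the surviving block is genuinely the smaller Vandermonde matrix. The passage from the rectangular statement to the square one is routine once \emph{full rank} is read as $\operatorname{rank} = \min(m,n)$, and the nonvanishing of the product under the distinctness hypothesis is then immediate.
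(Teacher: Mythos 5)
Your proposal is correct and rests on the same key fact as the paper: the nonvanishing of the square Vandermonde determinant $\prod_{1 \le i < j \le n} (\alpha_j - \alpha_i)$ when the nodes are distinct. The difference is one of completeness rather than route: the paper's proof is a single sentence citing the determinant formula from a matrix-analysis text, whereas you prove the identity yourself by induction with column operations, and you also explicitly handle the rectangular case ($m \ne n$) by extracting a $\min(m,n) \times \min(m,n)$ Vandermonde submatrix. That last step is actually a small gap in the paper's own argument --- the cited determinant formula only applies to square matrices, while the lemma as stated allows $m \ne n$ --- so your reduction makes rigorous what the paper leaves implicit.
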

\begin{proof}
Since the determinant of $V$ is $\det(V) = \prod_{1 \le i < j \le n} (\alpha_j - \alpha_i)$ (see \cite{roger1994topics}, chapter 6.1), the lemma stands.
\end{proof}

\begin{lemma}\label{lmm:observability}
  The linear time-invariant system defined in (\ref{eq:linear-system-x}) is uniformly completely observable, if $K$ is not very large.
\end{lemma}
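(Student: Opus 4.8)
The plan is to compute the observability matrix $\bm O$ explicitly and to recognize it, up to a nonsingular column rescaling, as a Vandermonde matrix, so that Lemma \ref{lmm:vandermonde-matrix} immediately delivers the full-rank conclusion. First I would exploit the fact that $\bm H = [1,0,\dots,0]$ simply extracts the first row of any matrix it premultiplies, so that the $(k+1)$-th row of $\bm O$ is $\bm H\bm \Phi^k$. Invoking Lemma \ref{lmm:decomposition-Phi}, we have $\bm \Phi^k(T) = \bm \Phi(kT)$, whose first row is $\left[1,\, kT,\, \frac{(kT)^2}{2!},\, \dots,\, \frac{(kT)^K}{K!}\right]$. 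Stacking these rows for $k = 0,1,\dots,K$ produces a $(K+1)\times(K+1)$ matrix whose entry in row $k$, column $j$ equals $\frac{(kT)^j}{j!} = \frac{T^j}{j!}\,k^j$.

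The key step is then to peel off the column-dependent scalars. Since $T \ne 0$, each factor $\frac{T^j}{j!}$ is nonzero, so dividing the $j$-th column by $\frac{T^j}{j!}$ is a rank-preserving operation. What remains is exactly the matrix with entries $k^j$ for $k,j \in \{0,1,\dots,K\}$, i.e. the Vandermonde matrix built on the nodes $\alpha_k = k$. Because the nodes $0,1,2,\dots,K$ are pairwise distinct, Lemma \ref{lmm:vandermonde-matrix} guarantees this Vandermonde matrix is of full rank, and hence so is $\bm O$. By Definition \ref{def:observability}, the system (\ref{eq:linear-system-x}) is therefore uniformly completely observable.

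I expect the only genuinely delicate point to be the interpretation of the qualifier ``if $K$ is not very large.'' In exact arithmetic the argument above shows $\bm O$ is invertible for \emph{every} finite $K$, since the Vandermonde determinant $\prod_{0 \le i < j \le K}(j-i)$ never vanishes; the rank claim is in fact unconditional. The caveat must therefore refer not to exact singularity but to numerical conditioning: the Vandermonde matrix on equally spaced nodes is notoriously ill-conditioned, with condition number growing very rapidly in $K$, so that in finite-precision computation $\bm O$ becomes \emph{effectively} rank-deficient once $K$ is large. In writing the proof I would make this distinction explicit, establishing exact full rank without restriction on $K$ and then remarking that the practical (numerical) observability margin degrades as $K$ grows, which is the content the phrase is meant to capture.
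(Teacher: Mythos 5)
Your proof follows essentially the same route as the paper's: both invoke Lemma \ref{lmm:decomposition-Phi} to identify the rows of $\bm O$ as the first rows of $\bm \Phi(kT)$, strip away the nonzero column factors $\frac{T^j}{j!}$, and conclude full rank from the Vandermonde Lemma \ref{lmm:vandermonde-matrix}. Your version is in fact tighter on two points the paper leaves implicit: you state the column rescaling explicitly as a rank-preserving operation, and you correctly observe that the full-rank conclusion holds unconditionally in exact arithmetic, so that the qualifier ``if $K$ is not very large'' (which the paper justifies only by saying entries ``tend to zeroes'') is really a statement about finite-precision conditioning rather than about rank.
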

\begin{proof}
  \begin{equation}
    %\begin{array}{cl}
    \bm O = \left[
      \begin{array}{c}
        \bm H \\
        \bm H \bm \Phi \\
        \vdots \\
        \bm H \bm \Phi^{K}
      \end{array}
      \right]
          = \left[
      \begin{array}{*{45}{c}}
             {1}&{0}     &{0}               &{\cdots}   &{0}\\
             {1}&{T}     &{\frac{(T)^2}{2}} &{\cdots}   &{\frac{(T)^K}{K!}} \\
             {1}&{2T}     &{\frac{(2T)^2}{2}} &{\cdots}   &{\frac{(2T)^K}{K!}} \\
        {\vdots}&{\vdots}&{\vdots}          &{\ddots}   &{\vdots}\\
             {1}&{KT}    &{\frac{(KT)^2}{2}}&{\cdots}   &{\frac{(KT)^K}{K!}}
      \end{array}
      \right].
    %\end{array}
  \end{equation}

  Note that, if $K$ is very large, many entries of $\bm O$ would tend to zeroes. Thus, if $K$ is not very large, by Lemma \ref{lmm:vandermonde-matrix}, we have
  \begin{equation}
    \begin{array}{cl}
    rank(\bm O) = rank\left(\left[
      \begin{array}{*{45}{c}}
             {0^0}&{0^1}     &{0^2}       &{\cdots}   &{0^K}\\
             {1^0}&{1^1}     &{1^2}       &{\cdots}   &{1^K} \\
             {2^0}&{2^1}     &{2^2}     &{\cdots}   &{2^K} \\
        {\vdots}&{\vdots}&{\vdots}  &{\ddots}   &{\vdots}\\
             {K^0}&{K^1}     &{K^2}     &{\cdots}   &{K^K}
      \end{array}
      \right]\right)
      = K+1,
    \end{array}
  \end{equation}
  meaning $\bm O$ is of full rank. According to Definition \ref{def:observability}, this lemma stands.
\end{proof}

\begin{lemma}\label{lmm:controbility}
  The linear time-invariant system defined in (\ref{eq:linear-system-x}) is uniformly completely controllable, if $K$ is not very large and $\bm G$ is given as one of the following cases:
  \begin{enumerate}[(a)]
    \item $\bm G_1 = [\frac{T^K}{K!},...,T,1]'$;
    \item $\bm G_2 = diag\{\frac{T^K}{K!},...,T,1\}$;
    \item $\bm G_3$ as an identity matrix $\bm I$. $\bm I$ denotes the identity matrix with proper dimensions.
  \end{enumerate}
\end{lemma}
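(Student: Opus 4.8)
The plan is to show that in every case the controllability matrix $\bm C = [\bm G, \bm \Phi \bm G, \ldots, \bm \Phi^K \bm G]$ attains its maximal possible row rank $K+1$ (the state dimension), which by Definition \ref{def:controbility} is exactly uniform complete controllability. I would dispatch cases (b) and (c) immediately: for $\bm G_3 = \bm I$ the very first block of $\bm C$ is the identity, and for $\bm G_2 = \mathrm{diag}\{T^K/K!, \ldots, T, 1\}$ the first block is a diagonal matrix whose entries are all nonzero whenever $T \ne 0$, hence nonsingular. In either situation $\bm C$ already contains a $(K+1)\times(K+1)$ invertible submatrix, so it has full row rank without further work.

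The substantive case is (a), where $\bm G_1$ is a single column and $\bm C$ becomes a square $(K+1)\times(K+1)$ matrix whose $k$-th column is $\bm \Phi^k \bm G_1$ for $k = 0, 1, \ldots, K$. Here I would first invoke Lemma \ref{lmm:decomposition-Phi} to replace $\bm \Phi^k$ by $\bm \Phi(kT)$, so that the $(i,j)$ entry of $\bm \Phi^k$ is $\frac{(kT)^{j-i}}{(j-i)!}$ for $j \ge i$. Writing $L = K+1-i$ and carrying out the matrix-vector product $\bm \Phi(kT)\bm G_1$, the $i$-th entry should collapse under the binomial theorem to $\frac{T^L}{L!}(1+k)^L$. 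In other words, after factoring the nonzero scalar $T^L/L!$ out of row $i$, row $i$ of $\bm C$ becomes $[(1+0)^L, (1+1)^L, \ldots, (1+K)^L]$, with $L$ running from $K$ down to $0$ as $i$ runs from $1$ to $K+1$.

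At this point the reduced matrix is, up to a reordering of its rows, exactly a Vandermonde matrix in the nodes $\alpha_k = 1+k$ for $k = 0, \ldots, K$, namely the distinct integers $1, 2, \ldots, K+1$. Since these nodes are pairwise distinct, Lemma \ref{lmm:vandermonde-matrix} guarantees full rank, and because the row-scaling factors $T^L/L!$ are nonzero for $T \ne 0$, the original matrix $\bm C$ inherits rank $K+1$. This would close case (a) and hence the lemma.

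I expect the binomial collapse to be the only real obstacle: one must verify that the convolution $\sum_{m} \frac{(kT)^m}{m!}\frac{T^{L-m}}{(L-m)!}$ indeed telescopes to $\frac{T^L}{L!}(1+k)^L$, which is the identity that turns the seemingly complicated product $\bm \Phi(kT)\bm G_1$ into a clean Vandermonde row. I would also echo the caveat already made for observability in Lemma \ref{lmm:observability}: although the Vandermonde determinant is nonzero for every finite $K$, the entries $(K+1)^K$ grow while the scalings $T^L/L!$ shrink rapidly, so the matrix is only numerically well-conditioned, i.e.\ safely full rank in practice, when $K$ is not very large.
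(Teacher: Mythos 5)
Your proposal is correct and follows essentially the same route as the paper's proof: cases (b) and (c) are dispatched via the invertibility of the first block, and for case (a) you use Lemma \ref{lmm:decomposition-Phi} and the binomial collapse $\sum_{m}\frac{(kT)^m}{m!}\frac{T^{L-m}}{(L-m)!}=\frac{T^L}{L!}(1+k)^L$ to reduce $\bm C_{\bm \Phi,\bm G_1}$ to a row-scaled Vandermonde matrix with distinct nodes $1,2,\ldots,K+1$, exactly as in Eq.~(\ref{eq:controbility-check-G1-simplified}) and Lemma \ref{lmm:vandermonde-matrix}. Your closing caveat about large $K$ also mirrors the paper's remark on entries tending to zero.
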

\begin{proof}
Let $\bm C_{\bm \Phi, \bm G}$ denotes the controllability matrix defined by the pair $[\bm \Phi, \bm G]$. Since
  %\begin{equation}
    $\bm C = [\bm G, \bm \Phi \bm G, ..., \bm \Phi^{K} \bm G]$,
  %\end{equation}
$rank(\bm C_{\bm \Phi, \bm G_3}) = K+1$ (full rank) is easy to check. Due to $rank(\bm C_{\bm \Phi, \bm G_2}) = rank(\bm C_{\bm \Phi, \bm G_3})$, $rank(\bm C_{\bm \Phi, \bm G_2}) = K+1$ also holds. As for $\bm C_{\bm \Phi, \bm G_1}$, we have
%\begin{figure*}[htbp]
%\centering
  \begin{equation}\label{eq:controbility-check-G1}
    \begin{array}{cl}
    \bm C_{\bm \Phi, \bm G_1} &= \left[
        \bm G_1,
        \bm \Phi \bm G_1,
        \cdots,
        \bm \Phi^{K}  \bm G_1
      \right] \\
          &= \left[
      \begin{array}{*{45}{c}}
             {\displaystyle \frac{T^K}{K!}}           &{\displaystyle \sum_{i = 0}^{K} \frac{(1T)^i}{i!} \frac{(T)^{K-i}}{(K-i)!} }
                                        %&{\displaystyle \sum_{i = 0}^{K} \frac{(2T)^i}{i!} \frac{(T)^{K-i}}{(K-i)!} }
                                        &{\cdots}
                                        &{\displaystyle \sum_{i = 0}^{K} \frac{(KT)^i}{i!} \frac{(T)^{K-i}}{(K-i)!} } \\

             {\displaystyle \frac{T^{K-1}}{{K-1}!}}   &{\displaystyle \sum_{i = 0}^{K-1} \frac{(1T)^i}{i!} \frac{(T)^{K-1-i}}{(K-1-i)!}}
                                        %&{\displaystyle \sum_{i = 0}^{K-1} \frac{(2T)^i}{i!} \frac{(T)^{K-1-i}}{(K-1-i)!}}
                                        &{\cdots}
                                        &{\displaystyle \sum_{i = 0}^{K-1} \frac{(KT)^i}{i!} \frac{(T)^{K-1-i}}{(K-1-i)!}} \\

             {\vdots}                   &{\vdots}
                                        %&{\vdots}
                                        &{\ddots}
                                        &{\vdots}\\
             {T}                        &{\displaystyle \sum_{i = 0}^{1} \frac{(1T)^i}{i!} \frac{(T)^{1-i}}{(1-i)!} }
                                        %&{\displaystyle \sum_{i = 0}^{1} \frac{(2T)^i}{i!} \frac{(T)^{1-i}}{(1-i)!} }
                                        &{\cdots}
                                        &{\displaystyle \sum_{i = 0}^{1} \frac{(KT)^i}{i!} \frac{(T)^{1-i}}{(1-i)!} } \\

             {1}                        &{1}
                                        %&{1}
                                        &{\cdots}
                                        &{1}
      \end{array}
      \right].
    \end{array}
  \end{equation}

  %\noindent\rule{\textwidth}{.5pt}%\vskip3pt
%\end{figure*}
By binomial theorem, the entry of $\bm C_{\bm \Phi, \bm G_1}$ at $(I, J)$ is therefore
\begin{equation}
\begin{array}{cll}
\bm C_{\bm \Phi, \bm G_1} (I, J)    &= \displaystyle \sum_{i = 0}^{K - I} \frac{(JT)^i T^{K-I-i}}{i! (K-I-i)!} \\
                &= \displaystyle \frac{1}{(K-I)!} \left(JT+T\right)^{K-I},
\end{array}
\end{equation}
where $I, J = 0,1,2,...,K$,
giving $\bm C_{\bm \Phi, \bm G_1}$ further as
  \begin{equation}\label{eq:controbility-check-G1-simplified}
    \bm C_{\bm \Phi, \bm G_1}   = \left[
      \begin{array}{*{45}{c}}
             {\frac{T^K}{K!}}           &{\frac{(2T)^K}{K!}}
                                        &{\frac{(3T)^K}{K!}}
                                        &{\cdots}
                                        &{\frac{[(K+1)T]^K}{K!}} \\

             {\frac{T^{K-1}}{{K-1}!}}   &{\frac{(2T)^{K-1}}{{K-1}!}}
                                        &{\frac{(3T)^{K-1}}{{K-1}!}}
                                        &{\cdots}
                                        &{\frac{[(K+1)T]^{K-1}}{{K-1}!}}\\

             {\vdots}                   &{\vdots}           &{\vdots}          &{\ddots}   &{\vdots}\\
             {T}                        &2T
                                        &3T
                                        &{\cdots}
                                        &(K+1)T \\

             {1}                        &{1}
                                        &{1}
                                        &{\cdots}
                                        &{1}
      \end{array}
      \right].
  \end{equation}

Note that, if $K$ is very large, many entries of $\bm C_{\bm \Phi, \bm G_1}$ would tend to zeroes. Thus, if $K$ is not very large, by Lemma \ref{lmm:vandermonde-matrix}, we have
  \begin{equation}
    \begin{array}{lll}
        rank(\bm C_{\bm \Phi, \bm G_1})  &= rank\left(\left[
          \begin{array}{*{45}{c}}
                 {1^K}                      &{2^K}
                                            &{\cdots}
                                            &{(K+1)^K} \\

                 {1^{K+1}}                  &{2^{K-1}}
                                            &{\cdots}
                                            &{(K+1)^{K-1}}\\

                 {\vdots}                   &{\vdots} &{\ddots}   &{\vdots}\\

                 {1^1}                      &2^1
                                            &{\cdots}
                                            &(K+1)^1 \\

                 {1^0}                      &{2^0}
                                            &{\cdots}
                                            &{(K+1)^0}
          \end{array}
          \right]\right)
          & = K+ 1.
    \end{array}
  \end{equation}
Since $\bm C_{\bm \Phi, \bm G_1}$ defined in (\ref{eq:controbility-check-G1}) is rank-sufficiency, this lemma stands.
\end{proof}

Now, it is sufficient to give the theorem below to guarantee the reliability of our TVLAP-KF.
\begin{theorem}\label{thm:convergence}
  For any given norm-finite $\bmh X_{0|0}$, if $\bm \Phi$, $\bm G$ and $\bm R$ are bounded, $[\bm \Phi, \bm H]$ is uniformly completely observable, and  $[\bm \Phi, \bm G]$ is uniformly completely controllable, then
  \begin{equation}\label{eq:convergence-X}
    \bmh X_{n|n} \to_d \bm X_{n}, \text{as } n \to \infty,
  \end{equation}
  meaning
  \begin{equation}\label{eq:convergence-p}
    \hat p^{(k)}(n) \to_d p^{(k)}(n), \text{as } n \to \infty, \forall k = 0,1,2,...,K.
  \end{equation}
  %where $\bm R^{-1/2} {\bm R^{-1/2}}’ = \bm R$ and $\bm Q^{-1/2} {\bm Q^{-1/2}}’ = \bm Q$.
\end{theorem}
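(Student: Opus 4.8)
The plan is to recognize Theorem \ref{thm:convergence} as an instance of the classical stability theorem for the discrete-time Kalman filter, and to discharge its hypotheses using the structural lemmas already proved. First I would note that the system (\ref{eq:linear-system-x}) is linear and \emph{time-invariant}: $\bm\Phi$ and $\bm H$ are the constant matrices (\ref{eq:system-matrix}) and (\ref{eq:measure-matrix}), $\bm G$ is one of the fixed choices of Lemma \ref{lmm:controbility}, and $\bm R$ is the constant scalar of Subsection \ref{subsec:estimate-R}. Thus boundedness of $\bm\Phi$, $\bm G$, $\bm R$ is immediate for any finite order $K$, and Lemmas \ref{lmm:observability} and \ref{lmm:controbility} supply precisely the uniform complete observability of $[\bm\Phi,\bm H]$ and the uniform complete controllability of $[\bm\Phi,\bm G]$ that the theorem assumes.

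Next I would introduce the estimation error $\bmt X_{n|n} := \bm X_n - \bmh X_{n|n}$ and track its propagation through the Kalman recursion. Writing the one-step error transition operator as $(\bm I - \bm K_n \bm H)\bm\Phi$, where $\bm K_n$ is the Kalman gain built from the error covariance $\bm P_{n|n}$ solving the discrete Riccati recursion, the mean error obeys the homogeneous dynamics $E\{\bmt X_{n+1|n+1}\} = (\bm I - \bm K_{n+1}\bm H)\bm\Phi\, E\{\bmt X_{n|n}\}$, while its covariance obeys the Riccati equation driven by $\bm Q$ and $\bm R$. The core analytic fact I would invoke (the discrete Kalman--Bucy stability theorem, see \cite{simon2006optimal}) is that under uniform complete observability and controllability with bounded data, the Riccati recursion admits a unique bounded positive-definite steady state $\bm P_\infty$ to which $\bm P_{n|n}$ converges from \emph{any} finite initialization $\bm P_{0|0}$, and the corresponding closed-loop matrix is a contraction, with spectral radius strictly below one. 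This contraction drives $E\{\bmt X_{n|n}\}\to \bm 0$ exponentially, so the filter forgets the arbitrary finite initial estimate $\bmh X_{0|0}$, while simultaneously the error covariance settles at $\bm P_\infty$.

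Putting these together, the error $\bmt X_{n|n}$ converges in distribution to a fixed zero-mean Gaussian with covariance $\bm P_\infty$, a limit that is independent of $(\bmh X_{0|0},\bm P_{0|0})$; this is exactly the statement $\bmh X_{n|n}\to_d \bm X_n$ in (\ref{eq:convergence-X}). Finally, reading off the components of the state vector (\ref{eq:state-vector}), where $X_k(n)=p^{(k)}(n)$, the componentwise convergence (\ref{eq:convergence-p}) follows immediately for every $k=0,1,\dots,K$.

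The hard part will be the interpretation and honest statement of the convergence mode rather than any single computation. Because the persistent process noise $\bm W(n)$ continually re-injects uncertainty, one cannot claim $\bmt X_{n|n}\to \bm 0$; the best obtainable is convergence of the error \emph{distribution} to the steady-state Gaussian together with exponential forgetting of the initialization, which is what $\to_d$ must mean here. Establishing the contraction property of the closed-loop matrix, i.e. the Riccati convergence itself, is the genuine technical obstacle, but it is standard and is guaranteed precisely by the observability and controllability secured in Lemmas \ref{lmm:observability} and \ref{lmm:controbility}; the remaining steps are bookkeeping.
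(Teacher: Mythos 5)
Your proposal is correct and takes essentially the same route as the paper: the paper's proof simply cites the classical discrete Kalman filter stability theory (Kalman 1961; Anderson--Moore, ch.~4.4) and invokes Lemmas \ref{lmm:observability} and \ref{lmm:controbility} to discharge the observability/controllability hypotheses, exactly as you do, with your Riccati-convergence and exponential-forgetting sketch being precisely the content of the cited theorem. The only additional detail in the paper's proof is the bookkeeping remark that $rank(\bm O_{\bm \Phi, \bm H}) = rank(\bm O_{\bm \Phi, \bm H \bm R^{-1/2}})$ and $rank(\bm C_{\bm \Phi, \bm G}) = rank(\bm C_{\bm \Phi, \bm G \bm Q^{-1/2}})$, which matches Lemmas \ref{lmm:observability} and \ref{lmm:controbility} to the noise-normalized pairs appearing in the cited references.
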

\begin{remark}
  Note that in Theorem \ref{thm:convergence}, the notation $\to_d$ means convergence in distribution, for example, $\bmh X_{n|n} \to_d \bm X_{n}$ admits $\left[\bmh X_{n|n} - \bm X_{n}\right] \to_d \bm N(\bm 0, \bm P_{k|k})$ where $\bmh X_{n|n}$ means the \textit{a posterior} estimation of $\bm X_n$ given by Kalman filter; $\bm N(\cdot, \cdot)$ means a  multivariate normal distribution; and $\bm P_{k|k}$ is the \textit{a posterior} estimation covariance returned by Kalman filter.
\end{remark}
\begin{proof}
  According to \cite{kalman1961new,anderson1971stability,anderson2012optimal} (see chapter 4.4 of \cite{anderson2012optimal}), with support of our Lemma \ref{lmm:observability} and Lemma \ref{lmm:controbility}, this theorem holds. Note that
  %\begin{equation}
    $rank(\bm O_{\bm \Phi, \bm H}) = rank(\bm O_{\bm \Phi, \bm H \bm R^{-1/2}})$,
  %\end{equation}
  and
  %\begin{equation}
    $rank(\bm C_{\bm \Phi, \bm G}) = rank(\bm C_{\bm \Phi, \bm G \bm Q^{-1/2}})$,
  %\end{equation}
  where $\bm R^{-1/2} (\bm R^{-1/2})’ = \bm R$ and $\bm Q^{-1/2} (\bm Q^{-1/2})’ = \bm Q$. Since $\bm R$ and $\bm Q$ are positive definite, the decomposition is possible. In above, $\bm O_{\bm \Phi, \bm H}$ denotes the observability matrix defined by the pair $[\bm \Phi, \bm H]$. The notation conventions keep same to $\bm C_{\bm \Phi, \bm G}$, $\bm O_{\bm \Phi, \bm H \bm R^{-1/2}}$ and $\bm C_{\bm \Phi, \bm G \bm Q^{-1/2} }$.
\end{proof}

\subsection{Issue of Selecting the Model Order $K$ and the Time Gap $T$}\label{subsec:selecting-para}
It is easy to see that the core of the TVLAP model is the matrix $\bm \Phi$ defined in (\ref{eq:system-matrix}). It relates to the parameters $K$ and $T$, and the model performances depend much on the proper values of them.
\begin{itemize}
  \item {\textbf{Choosing $T$}.} For a typical time series, $T = 1$ in theory. However, in practice, the suggested value of $T$ should be $T \le 1$. This is because the original series $x(n)$ contains the noise (high-frequency) component, meaning the estimation error to derivatives would never be zeroes even though the true values are zeroes. Therefore, if we have $T<1$, the impact introduced by the estimation errors to high-order derivatives could be weaken or eliminated. This is because the term $T^k/k!$ will rapidly converge to zero if $T < 1$ as $k$ increases.
  \item {\textbf{Choosing $K$}.} In theory, for $K$, the larger, the better. However, in practice, due to the exists of noise and Runge phenomenon in polynomial fitting, $K$ should not be extremely large. The suggested value of $K$ should be $2 \sim 8$. $K = 4$ is a typical option. This is the experience of authors obtained in simulation studies. On the other hand, as stated in Lemma \ref{lmm:observability} and Lemma \ref{lmm:controbility}, the selected $K$ should not be very large so that the observability and controllability matrices are non-singular.
\end{itemize}

\begin{remark}
It is possible that $T$ may have explicit meaning in practice for a general time series, meaning we cannot assign value to it arbitrarily. For example, if the time series is the sales volume of apple in a store per day. The meaning of $T$ should be $1$ with unit of $Day$. It seems inconsistent to our suggestion of $T \le 1$. However, fortunately this is not an issue because $f^{(k)}$ is a variable, meaning the difference could be compensated by the estimate to $f^{(k)}$. Plus, the estimates to $f^{(k)}$ may become larger than their real values, advantaging that they become easier to observe, if we do not care about the exactly true values of $f^{(k)}$.
\end{remark}

\subsection{Modeling Error Between $p(n)$ and $f(n)$}
Although Theorem \ref{thm:convergence} asserts the sufficiency of the proposed TVLAP-KF, we should mention here that our TVLAP-KF may still suffer from some numerical errors. This is because the real trend of $x(n)$ is $f(n)$, not $p(n)$. Facing this problem of modeling error and/or unknown disturbances, the adaptive Kalman filter with unknown disturbances \cite{kim2000robust} and/or unknown inputs \cite{yong2016unified} are developed to bridge (or eliminate) the modeling gap. Specifically, instead of studying (\ref{eq:linear-system-x}), we should focus on
\begin{equation}\label{eq:linear-system-x-unknown-inputs}
    \left\{ \begin{array}{rll}
        \bm X(n+1) &= \bm \Phi \bm X(n) + \bm M \bm A(n) +  \bm G \bm W(n) \\
        \bm Y(n) &= \bm H \bm X(n) + \bm V(n),
    \end{array} \right.
\end{equation}
where $\bm A(n)$ is the unknown input driven by a known matrix $\bm M$. The effort here should be on estimating $\bm X(n)$ in presence of the unknown $\bm A(n)$. Obviously, in (\ref{eq:linear-system-x-unknown-inputs}), $\bm A(n)$ is used to compensate or eliminate the modeling bias (modeling error between $f(n)$ and $p(n)$). Thus in this paper, the TVLAP-KF does not merely refer to the canonical Kalman filter, also includes any proper members in Kalman filter family.
\begin{remark}
We consider this modeling issue just for theoretical completeness. In engineering, according to authors' experience, it does not necessarily consider this issue. This is because $f(n)$ is usually guaranteed to be continuous. Eq. (\ref{eq:Weierstrass-approximation}) could ensure the estimation error due to modeling bias (between $p(n)$ and $f(n)$) to be acceptable for a real problem. Besides, more complicated algorithm would introduce extra computation burden which is undesired for online signal processing problems.
\end{remark}

\subsection{General Methodology for Non-White Noise}\label{sec:holistic-methodology}
Unfortunately, the Kalman filter is optimal only for white noise. However, the Kalman filter is still to some degree effective to return a feasible solution, in practice. The good news is that there exists the exact method for colored noise Kalman filter. We in the following derive the TVLAP-KF model for a time series with colored (Non-White) noise. That is, we no longer assume $x_s(n)$ to be $white(n)$. Instead, we investigate the general colored case of it.

Suppose the remainder (i.e., the noise part) $\hat{x}_s(n)$ could be modelled by $ARMA(p,q|\bm \varphi, \bm \theta)$ with the transfer function as
\begin{equation}\label{eq:regular-ARMA}
    H(z) = \frac{\theta_0 + \theta_1 z^{-1} + ... + \theta_q z^{-q}}{1 + \varphi_1 z^{-1} + ... + \varphi_p z^{-p}}.
\end{equation}
It means the input of this ARMA system is a 1-dimensional Gaussian white sequence $\bm \varepsilon(n)$ and the output is ${x}_s(n) =: \bm V(n) = V(n)$. Note that the first coefficient of the denominator polynomial is normalized to $1$. Note also that the white noise case is the special case of $ARMA(p,q)$ with $H(z) = \theta_0$, that is, $ARMA(0,0)$.

Let $r := max\{p, q\}$, $\varphi_j := 0,~\forall j > p$, and $\theta_j := 0,~\forall j > q$. Then we have an alternative representation of (\ref{eq:regular-ARMA}) as
\begin{equation}\label{eq:regular-ARMA-adjusted}
    \begin{array}{rlll}
      H(z)  &= \displaystyle \frac{\theta_0 + \theta_1 z^{-1} + ... + \theta_{r} z^{-r}}{1 + \varphi_1 z^{-1} + ... + \varphi_r z^{-r}} \\
            &~ \\
            &= \displaystyle \frac{\theta_0 z^{r} + \theta_1 z^{r-1} + ... + \theta_{r}}{z^{r} + \varphi_1 z^{r-1} + ... + \varphi_r} \\
            &~ \\
            &= \theta_0 + \displaystyle \frac{\beta_1 z^{r-1} + ... + \beta_{r}}{z^{r} + \varphi_1 z^{r-1} + ... + \varphi_r},
    \end{array}
\end{equation}
where $\beta_i := \theta_i - \theta_0 \varphi_i,~i=1,2,...,r$.

Therefore, the state space counterpart of (\ref{eq:regular-ARMA}) is
\begin{equation}\label{eq:state-space-ARMA}
\left\{
    \begin{array}{rlll}
        \bm \xi(n+1) &= \bm \Xi \bm \xi(n) + \bm \Upsilon \bm \varepsilon(n) \\
        \bm V(n)     &= \bm \Pi \bm \xi(n) + \bm \Lambda \bm \varepsilon(n)
    \end{array}
\right.
\end{equation}
where
\begin{equation}\label{eq:system-matrix-ARMA}
        \bm \Xi =  \left[
                        \begin{array}{*{45}{c}}
                            {0}&{1}&{0}&{\cdots}&{0}\\
                            {0}&{0}&{1}&{\cdots}&{0}\\
                            {\vdots}&{\vdots}&{\vdots}&{\ddots}&{\vdots}\\
                            {0}&{0}&{0}&{\cdots}&{1}\\
                            {-\varphi_{r}}&{-\varphi_{r-1}}&{-\varphi_{r-2}}&{\cdots}&{-\varphi_1}
                        \end{array}
                    \right],
\end{equation}
\begin{equation}\label{eq:driven-matrix-ARMA}
    \bm \Upsilon =  \left[
                    \begin{array}{*{45}{c}}
                        {0}\\
                        {0}\\
                        {\vdots}\\
                        {0}\\
                        {1}
                    \end{array}
                \right],
\end{equation}
\begin{equation}\label{eq:obsv-matrix-ARMA}
  \bm \Pi  = \left[ \begin{array}{*{45}{c}} \beta_r & \beta_{r-1} & \cdots & \beta_2 & \beta_1  \end{array} \right],
\end{equation}
and
\begin{equation}\label{eq:obsv-driven-matrix-ARMA}
  \bm \Lambda  = \theta_0.
\end{equation}
Note that this $\bm V$ is conceptually similar to the one in Eq. (\ref{eq:measure-equation}).

Thus, the entire state space model for our general model $x(n) = f(n) + x_s(n)$, namely, Eq. (\ref{eq:general-model-lite}), should be
\begin{equation}\label{eq:linear-system-x-noise-dynamic-included}
    \left\{ \begin{array}{rll}
        \bm X(n+1) &= \bm \Phi \bm X(n) + \bm G \bm W(n) \\
        \bm Y(n) &= \bm H \bm X(n) + \bm V(n) \\
        \bm \xi(n+1) &= \bm \Xi \bm \xi(n) + \bm \Upsilon \bm \varepsilon(n) \\
        \bm V(n)     &= \bm \Pi \bm \xi(n) + \bm \Lambda \bm \varepsilon(n),
    \end{array} \right.
\end{equation}
which, by augmenting the state vector, is equivalent to %Eq. (\ref{eq:linear-system-x-augmented}).
%\begin{figure*}[htbp]
%  \centering
%    \begin{equation}\label{eq:linear-system-x-augmented}
%        \left\{ \begin{array}{rll}
%            \left[ \begin{array}{c}
%              \bm X(n+1)  \\
%              \bm \xi(n+1)
%            \end{array} \right] &=
%            \left[ \begin{array}{*{45}c}
%              \bm \Phi &  \bm 0\\
%              \bm 0 & \bm \Xi
%            \end{array} \right]
%            \left[ \begin{array}{c}
%              \bm X(n)  \\
%              \bm \xi(n)
%            \end{array} \right] +
%            \left[ \begin{array}{*{45}c}
%              \bm G &  \bm 0\\
%              \bm 0 & \bm \Upsilon
%            \end{array} \right]
%            \left[ \begin{array}{c}
%              \bm W(n)  \\
%              \bm \varepsilon(n)
%            \end{array} \right]   \\
%
%
%            \bm Y(n) &=
%            \left[ \begin{array}{*{45}c}
%              \bm H &  \bm \Pi
%            \end{array} \right]
%            \left[ \begin{array}{c}
%              \bm X(n)  \\
%              \bm \xi(n)
%            \end{array} \right] + \bm \Lambda \bm \varepsilon(n)
%        \end{array}, \right.
%    \end{equation}
%\end{figure*}
    \begin{equation}\label{eq:linear-system-x-augmented}
        \left\{ \begin{array}{rll}
            \bmb X(n+1) &= \bmb \Phi \bmb X(n) + \bmb w(n) \\
            \bm Y(n)    &= \bmb H(n) \bmb X(n) + \bmb v(n),
        \end{array} \right.
    \end{equation}
where
\begin{equation}\label{eq:linear-system-x-augmented-state}
    \bmb X(n) :=
        \left[ \begin{array}{c}
          \bm X(n)  \\
          \bm \xi(n)
        \end{array} \right],
\end{equation}
\begin{equation}\label{eq:linear-system-x-augmented-state-matrix}
    \bmb \Phi :=
            \left[ \begin{array}{*{45}c}
              \bm \Phi &  \bm 0\\
              \bm 0 & \bm \Xi
            \end{array} \right],
\end{equation}
\begin{equation}\label{eq:linear-system-x-augmented-noise}
    \bmb w(n) :=
            \left[ \begin{array}{*{45}c}
              \bm G &  \bm 0\\
              \bm 0 & \bm \Upsilon
            \end{array} \right]
            \left[ \begin{array}{c}
              \bm W(n)  \\
              \bm \varepsilon(n)
            \end{array} \right],
\end{equation}
\begin{equation}\label{eq:linear-system-x-augmented-measure-matrix}
    \bmb H :=
            \left[ \begin{array}{*{45}c}
              \bm H &  \bm \Pi
            \end{array} \right],
\end{equation}
and
\begin{equation}\label{eq:linear-system-x-augmented-measure-noise}
    \bmb v(n) := \bm \Lambda \bm \varepsilon(n).
\end{equation}

The system (\ref{eq:linear-system-x-augmented}) could be handled by the Colored Kalman filter (see \cite{simon2006optimal}, chapter 7.1). Note that the covariance matrix between the process noise $\bmb w(n)$ and the measurement noise $\bmb v(n)$ is
\begin{equation}\label{eq:linear-system-x-augmented-colored-cov}
    \begin{array}{rllll}
      \bm E[\bmb w(n) \bmb v^{T}(j)] &:= \bm M(n) \delta_{k-j} =
            \left[ \begin{array}{c}
              \bm 0  \\
              \bm \Upsilon \bmb R \bm \Lambda^{T}
            \end{array} \right] \delta_{n-j}  \\
            &=
            \left[ \begin{array}{c}
              \bm 0  \\
              \bm \Upsilon \bar R \Lambda
            \end{array} \right] \delta_{n-j},
    \end{array}
\end{equation}
where $\delta_{n-j}$ is the Kronecker delta function; $\bmb R = \bar R$ denote the variance of $\bm \varepsilon(n) = \varepsilon(n)$. Now, the last thing to do is to estimate the value of $\bar{R}(n)$.

Eq. (\ref{eq:regular-ARMA}) reveals how $\varepsilon(n)$ generates ${x}_s (n) := V(n)$. Since ${x}_s (n)$ is a WSS process with fixed variance $R(n)$, we can have the variance $\bar{R}(n)$ of $\varepsilon(n)$, according to \cite{diniz2010digital} (see chapter 2.11.1), implicitly defined as
\begin{equation}\label{eq:variance-relation-H}
    \begin{array}{rlll}
        R(n) &= \displaystyle \frac{1}{2\pi} \displaystyle \int_{-\pi}^{\pi} \left| H(e^{jw}) \right|^2 \cdot \bar{R}(n) dw \\
             &= \bar{R}(n) \cdot \displaystyle \frac{1}{2\pi} \displaystyle \int_{-\pi}^{\pi} \left| H(e^{jw}) \right|^2 dw,
    \end{array}
\end{equation}
where $H(e^{jw}) = H(z)|_{z = e^{jw}}$ is the Fourier frequency response of $H(z)$; the term $\left| H(e^{jw}) \right|^2 \cdot \bar{R}(n)$ denotes the power spectra of the output sequence ${x}_s(n)$. Suppose the impulse response of the system $H(z)$ is $h(n)$. According to the Parseval's theorem (see \cite{diniz2010digital}, chapter 2.9.11), we further have
\begin{equation}\label{eq:variance-relation-h}
  R(n) = \bar{R}(n) \cdot \displaystyle \sum_{n=-\infty}^{\infty} h^2(n) = \bar{R}_n \cdot \displaystyle \sum_{n=0}^{\infty} h^2(n),
\end{equation}
namely,
\begin{equation}\label{eq:variance-relation-h2}
  \bar{R}(n) = \displaystyle \frac{R(n)}{\displaystyle \sum_{n=0}^{\infty} h^2(n)}.
\end{equation}
Note that in (\ref{eq:variance-relation-h2}), $R(n)$ has already been estimated from the data series $\hat{x}_s$. For details, see Subsection \ref{subsec:estimate-R}. Note also that when $H(z)$ is stable, $\sum_{n=0}^{\infty} |h(n)|$ is convergent, which means $\sum_{n=0}^{\infty} h^2(n)$ is also convergent. Besides, when $H(z)$ is causal, $h(n) = 0$, $\forall n < 0$. For a real system $H(z)$, the stability and causality are guaranteed.

Reaching here, the TVLAP-KF for a colored-noise time series is wholly built.

\section{Simulation Study}
In this section, we provide some experiments to illustrate possible applications of TVLAP-KF in engineering, including: (a) Online Trend Estimating, Trend Tracking and Extrema Detecting; (b) Online Long-term Prediction; (c) Online Fault Diagnosis of Sensors; and (d) Highly-maneuvering Target Tracking. In order not to limit our framework into one narrow kind of specific problem or application, while ensuring the practical utilization value, we use two kinds of simulation scenarios: (a) Simulated data and scenario for estimation and forecasting problem; (b) Real data collected from real sensors for fault diagnosis.

\subsection{Trend Estimating, Trend Tracking and Extrema Detecting}
Suppose $t = 0:0.1:120$ (thus $T = 0.1$), $x(n) = 5\sin(0.1t) + WG(length(t))$, and let $K = 4$, $\bm Q = diag\{0,0,0,{0.01}^2\}$, $\bm R = 1^2$, where $WG$ means a White Gaussian process with mean of zero and variance of $1$, we then apply the Algorithm \ref{algo:Mean-Extrema-Estimating} to Mean Estimating and Extrema Detecting, and have the simulation results in Fig. \ref{fig:TVLAP-simulation}. Fig. \ref{fig:TVLAP-simulation} as well shows a result of the 200-step prediction. The prediction includes the mean forecasting and extrema forecasting. Note that we desire the low-frequency component of the raw series, since the measure is noised, we should trust more the system equation than the measurements. It is this reason that we have process variance be less than measurement valiance. Note also that the choose of $\bm Q$ and $\bm R$ does not necessarily depend on the true variance of raw series $x(n)$ \cite{li2017approximate}.

Fig. \ref{fig:TVLAP-simulation} (a) and (b) show that we can effectively estimate the $f(n)$ and its derivatives. Fig. \ref{fig:TVLAP-simulation} (b) suggests that when the estimated first-order derivative (green line) is positive, $f(n)$ is increasing; when negative, it is decreasing. Fig. \ref{fig:TVLAP-simulation} (c) shows that when the estimated first-order derivative is zero, $f(n)$ reaches its extrema. Whether they are maxima or minima is based on whether the estimated second-order derivatives at corresponding points are positive or not. For figure clearness, we did not plot the second-order derivative in Fig. \ref{fig:TVLAP-simulation} (c). Readers can redirect to Fig. \ref{fig:TVLAP-simulation} (b) to compare with. The satisfying prediction performances in Fig. \ref{fig:TVLAP-simulation} comes from the fact that the model we designed could preserve the high-order information of the changing pattern.

\begin{figure*}[htbp]
    \centering
    \subfigure[Mean Estimating and Forecasting]{
        \begin{minipage}[htbp]{\linewidth}
            \centering
            \includegraphics[height=4cm]{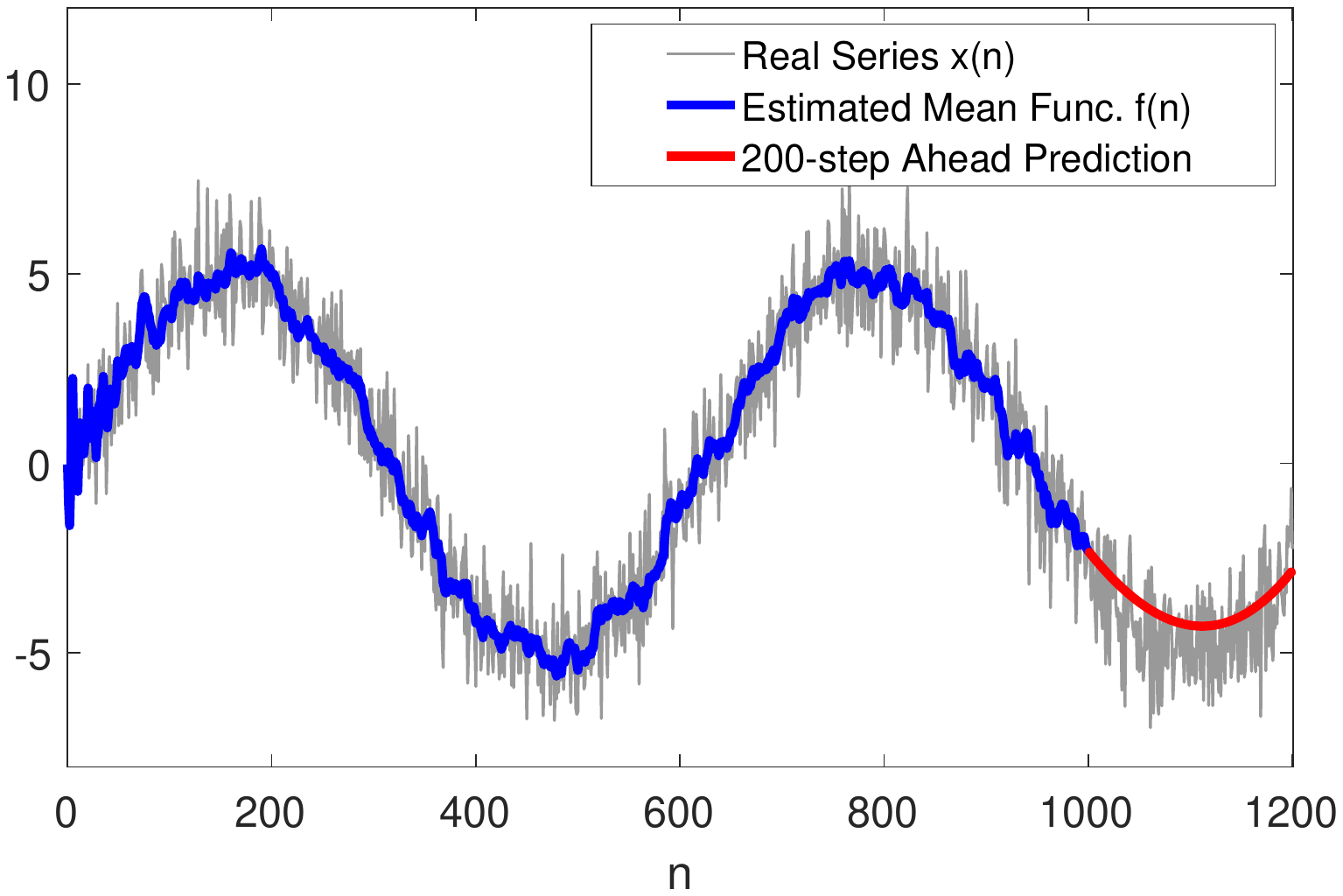}
        \end{minipage}
    }

    \subfigure[Estimated and forecasted derivatives of mean function]{
        \begin{minipage}[htbp]{0.46\linewidth}
            \centering
            \includegraphics[height=4cm]{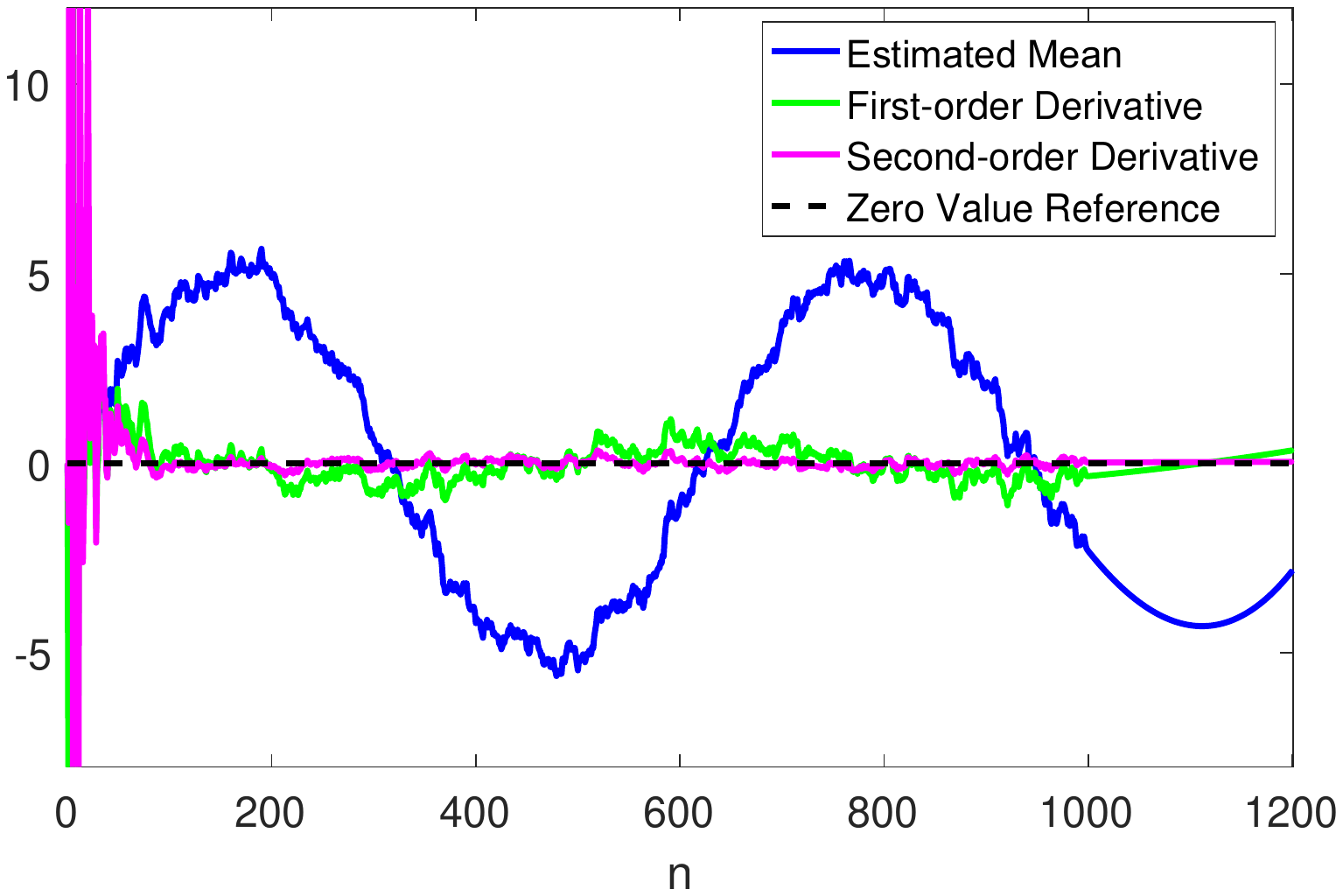}
        \end{minipage}
    }
    \subfigure[Extrema Detecting and Extrema forecasting]{
        \begin{minipage}[htbp]{0.46\linewidth}
            \centering
            \includegraphics[height=4cm]{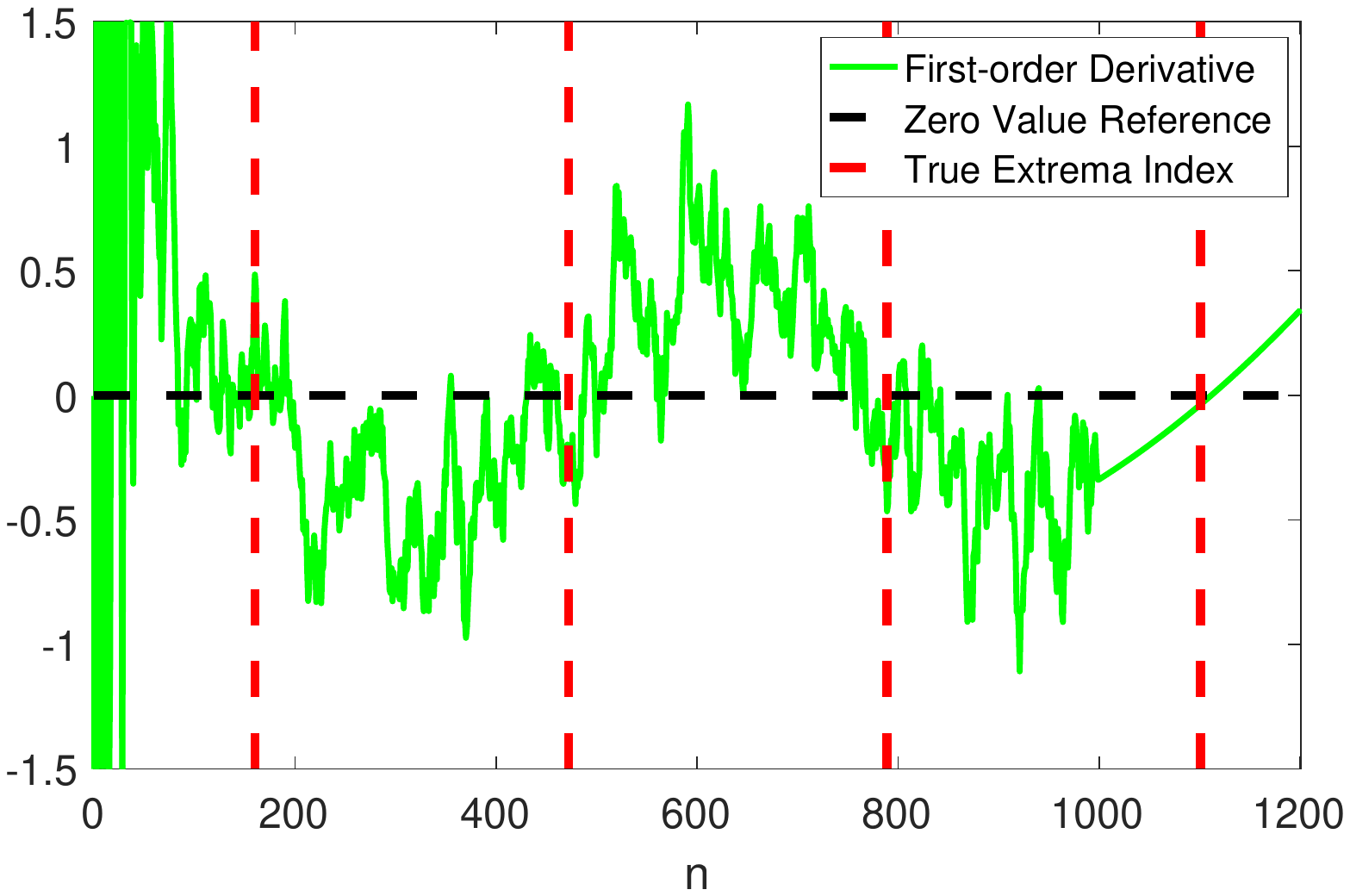}
        \end{minipage}
    }
    \centering
    \caption{Simulation results on trend tracking and extrema detecting of TVLAP model}
    \label{fig:TVLAP-simulation}
\end{figure*}

\subsection{Trend Estimation and Long-term Prediction Performance of TVLAP}\label{subsec:estimation-prediction-comparasion}
In this part, we demonstrate the estimation and prediction performance of TVLAP, with comparison with Holt's method and Local Level method \cite{hyndman2008forecasting,hyndman2018forecasting}. The off-line methods (which are only workable for block data) will not be taken into account. Suppose we have $t = 0:0.1:120$, $x(n) = 5\sin(0.1t) + \exp(0.03t) + WG(length(t))$, the 200-step ahead predictions given by TVLAP ($K = 4$, $\bm Q = diag\{0,0,0,{300}^2\}$, $\bm R = 1^2$, $T = 0.001$), Holt's, and Local Level models are displayed in Fig. \ref{fig:long-term-predction-cmp}.
\begin{figure}[htbp]
    \centering
    \includegraphics[height=4cm]{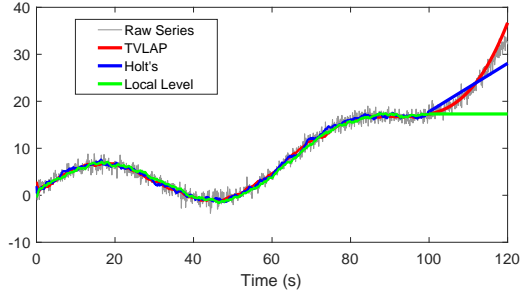}
    \caption{Long-term Prediction Performances of TVLAP, Holt's and Local Level}
    \label{fig:long-term-predction-cmp}
\end{figure}
All the prediction results given in Fig. \ref{fig:long-term-predction-cmp} are the respective best ones among $10$ simulations. The corresponding estimation (time from 0 to 100) and prediction (time from 100 to 120) MSE (mean square error) are given in Table \ref{tab:mse-value}. As we can see, it is deep pattern of data that allows us to make more satisfactory estimation and prediction. This is because, as mentioned earlier (issue of time-delay), the order-insufficient models cannot promptly track the relatively sharp changing pattern of a time series. Therefore, high-order models (with relatively large $K$) are expected.
\begin{table*}[htbp]\label{tab:mse-value}
    \centering
    \caption{Estimation and prediction MSE of TVLAP, Holt's, and Local Level}
    \begin{tabular}{lccccc}
        \toprule
                &  Estimation MSE  & Prediction MSE \\
        \midrule
        TVLAP        & 0.0689   &  2.5979 \\
        \hline
        Holt's        & 0.0790   &  17.5410 \\
        \hline
        Local Level        & 0.1709   &  62.3891 \\
        \bottomrule
    \end{tabular}
    %\hrulefill
\end{table*}

\subsection{Fault Diagnosis of Sensors}
In this part, we show the application of TVLAP-KF in Anchor Selection Problem in indoor positioning based on Ultra-wide Band (UWB) ranging signals \cite{ridolfi2018experimental}. The data in this experiment are real data collected from UWB sensors. In order to improve the positioning performances in an indoor space, for example, a warehouse, we deploy many UWB sensors (anchors) in one space. Due to signal sheltering and complex electromagnetic environment, ranging signals from different anchors may have different ranging performances at different areas. Thus, we aim to select sensors without large error from all the available anchors in one area to localize and track the moving target. The essence of the above issue is actually to diagnose the sensor fault (or detect the anomalies in ranging signals), in online manner. Ranging signals provided by three of all available anchors are showed in Fig. \ref{fig:ranging-signals}.
\begin{figure}[htbp]
    \centering
    \includegraphics[height=4cm]{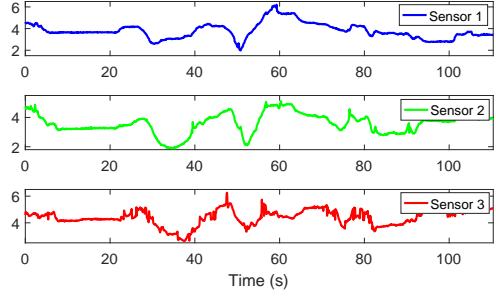}
    \caption{UWB ranging signals from three of anchors}
    \label{fig:ranging-signals}
\end{figure}

Intuitively, the Sensor 3 is with large error, since its ranging signal jumps at many discrete time indices (for instance, when $n = 25 \sim 30$, around $n=48$, and $n = 70 \sim 80$, $n$ is discrete time index). Those jumps are in fact errors because a real moving target cannot maneuver in such a sharp way. On the other hand, if they are indeed generated from sharp maneuvers, Sensor 1 and Sensor 2 should have same jumps in their ranging signals.

We aim to differentiate Sensor 3 from Sensor 1 and Sensor 2 so that Sensor 3 would be excluded to participate in positioning in this area. If we use TVLAP-KF ($K=4$, $R=0.03$, $\bm Q = 500^2$, $T = 0.001$. $R$ is estimated from the real data; $\bm Q$ is set to be large because we in this scenario emphasize more on observations than the system model) to estimate the changing pattern (first-order derivative) of ranging signals, we have Fig. \ref{fig:UWB-sensor-faults}. The variances of the three time series in Fig. \ref{fig:UWB-sensor-faults} are given in Table \ref{tab:sensor-variance-value}.
\begin{figure}[htbp]
    \centering
    \includegraphics[height=4cm]{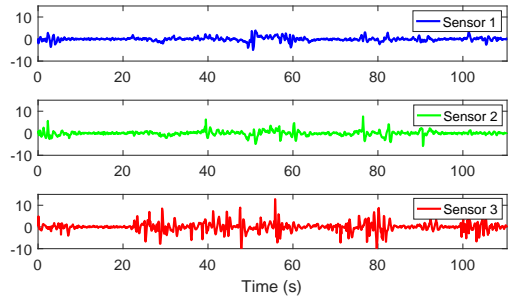}
    \caption{Deep pattern (changing pattern) of three UWB ranging signals}
    \label{fig:UWB-sensor-faults}
\end{figure}

\begin{table*}[htbp]\label{tab:sensor-variance-value}
    \centering
    \caption{Variances of the three time series in Fig. \ref{fig:UWB-sensor-faults}}
    \begin{tabular}{lccccc}
        \toprule
                        &  Sensor 1  & Sensor 2 & Sensor 3 \\
        \midrule
            Variance    &   0.6424   &   0.8293  & 3.3216 \\
        \bottomrule
    \end{tabular}
    %\hrulefill
\end{table*}

From Fig. \ref{fig:UWB-sensor-faults} and Table \ref{tab:sensor-variance-value}, it is easy to tell apart Sensor 3 from Sensor 1 and Sensor 2 because Sensor 3 has significantly large variance (or more outliers) in changing pattern of ranging signal. Note that the variance estimation method of zero-mean sequence $x(n)$ is given as $\sum_{i=1}^{n} x^2(n) /n$ (its online version, namely recursive version, is easy to derive). Note also that it is not reasonable to use time-difference (i.e., $[x(n) - x(n-1)]/T_s$) to estimate the first-order derivative of a noised time series $x(n)$, since the noise would be amplified by difference operator \cite{WANG2019ARIMA}, so that the patten can not be easily observed compared to Fig. \ref{fig:UWB-sensor-faults}. This point is easy to check. Thus the illustration for it is ignored.

\subsection{Highly-maneuvering Target Tracking}
In consideration of necessity, we do not re-design a new simulation for highly-maneuvering target tracking here. Readers are invited to refer to \cite{karsaz2009optimal} for a motivational example, in which the derivative of acceleration (that is, jerk, $K=3$) is considered to improve the tracking performances, compared to the traditional Constant Velocity ($K=1$) model and Constant Acceleration ($K=2$) model.

Essentially, this problem is mathematically similar to the example in Subsection \ref{subsec:estimation-prediction-comparasion}. When a target is highly maneuvering, the order-insufficient models cannot promptly track such sharp manoeuvres. Therefore, high-order models (with relatively large $K$) are expected.

\section{Conclusion}\label{sec:conclusion}
This paper studies the Time-Variant Local Autocorrelated Polynomial model with Kalman filter to investigate the deep pattern of a time series. Its possible applications in engineering are discussed. Simulation suggests that the order-insufficient models cannot promptly track the relatively sharp changes in a time series so that our TVLAP model can outperform other methods both in estimation and in forecasting. Besides, the extrema points of a time series could also be identified with our framework.

Although the presented methods are powerful in investigating the deep pattern of a time series, we admit that they are just complements to existing solutions for time series analysis like ARMA-SIN methodology, TBATS methodology, Box-Jenkins methodology, Box-Cox transformation, regression methods, machine learning methods, exponential smoothing method, moving average methods and so on, asserting no dominant position over other methods. This is because each method has its specific application domains, in which no other methods can outperform it.

\section*{Declarations of Interest}
The authors declare that there is no any potential competing interests.

% use section* for acknowledgment
\section*{Acknowledgment}
This work is supported by National Research Foundation of Singapore grant NRF-RSS2016-004 and Ministry of Education Academic Research Fund Tier 1 grants R-266-000-096-133, R-266-000-096-731, R-266-000-100-646 and R-266-000-119-133.
%Shixiong Wang would like to thank Dr. Yue Zhao (Email: yuezhao@u.nus.edu, National University of Singapore) and Dr. Huangjie Zhao (Email: huangjie@u.nus.edu, National University of Singapore) for their helpful discussions in analyzing Eq. (\ref{eq:controbility-check-G1}).

\section*{References}
\bibliographystyle{unsrt}
\bibliography{References.bib}

\begin{thebibliography}{10}

\bibitem{simon2006optimal}
Dan Simon.
\newblock {\em Optimal state estimation: Kalman, H infinity, and nonlinear
  approaches}.
\newblock John Wiley \& Sons, 2006.

\bibitem{hyndman2008forecasting}
Rob Hyndman, Anne~B Koehler, J~Keith Ord, and Ralph~D Snyder.
\newblock {\em Forecasting with exponential smoothing: the state space
  approach}.
\newblock Springer Science \& Business Media, 2008.

\bibitem{hyndman2018forecasting}
Rob~J Hyndman and George Athanasopoulos.
\newblock {\em Forecasting: principles and practice}.
\newblock OTexts, 2018.

\bibitem{box2015time}
George~EP Box, Gwilym~M Jenkins, Gregory~C Reinsel, and Greta~M Ljung.
\newblock {\em Time series analysis: forecasting and control}.
\newblock John Wiley \& Sons, 2015.

\bibitem{engle1982autoregressive}
Robert~F Engle.
\newblock Autoregressive conditional heteroscedasticity with estimates of the
  variance of united kingdom inflation.
\newblock {\em Econometrica: Journal of the Econometric Society}, pages
  987--1007, 1982.

\bibitem{brooks2019introductory}
Chris Brooks.
\newblock {\em Introductory econometrics for finance}.
\newblock Cambridge university press, 2019.

\bibitem{hamilton1995time}
James~D Hamilton.
\newblock Time series analysis.
\newblock {\em Economic Theory. II, Princeton University Press, USA}, pages
  625--630, 1995.

\bibitem{papoulis2002probability}
Athanasios Papoulis and S~Unnikrishna Pillai.
\newblock {\em Probability, random variables, and stochastic processes}.
\newblock Tata McGraw-Hill Education, 2002.

\bibitem{bollerslev1986generalized}
Tim Bollerslev.
\newblock Generalized autoregressive conditional heteroskedasticity.
\newblock {\em Journal of econometrics}, 31(3):307--327, 1986.

\bibitem{posedel2006analysis}
Petra Posedel.
\newblock Analysis of the exchange rate and pricing foreign currency options on
  the croatian market: the ngarch model as an alternative to the black-scholes
  model.
\newblock {\em Financial theory and practice}, 30(4):347--368, 2006.

\bibitem{li2018zd}
Dong Li, Xingfa Zhang, Ke~Zhu, and Shiqing Ling.
\newblock The zd-garch model: A new way to study heteroscedasticity.
\newblock {\em Journal of Econometrics}, 202(1):1--17, 2018.

\bibitem{otto2018generalised}
Philipp Otto, Wolfgang Schmid, and Robert Garthoff.
\newblock Generalised spatial and spatiotemporal autoregressive conditional
  heteroscedasticity.
\newblock {\em Spatial Statistics}, 26:125--145, 2018.

\bibitem{WANG2019ARIMA}
Shixiong Wang, Chongshou Li, and Andrew Lim.
\newblock Why are the arima and sarima not sufficient.
\newblock {\em arXiv preprint arXiv:1904.07632}, 2019.

\bibitem{granger1964spectral}
Clive William~John Granger and Michio Hatanaka.
\newblock {\em Spectral Analysis of Economic Time Series.(PSME-1)}.
\newblock Princeton university press, 1964.

\bibitem{koopmans1995spectral}
Lambert~H Koopmans.
\newblock {\em The spectral analysis of time series}.
\newblock Elsevier, 1995.

\bibitem{bloomfield2004fourier}
Peter Bloomfield.
\newblock {\em Fourier analysis of time series: an introduction}.
\newblock John Wiley \& Sons, 2004.

\bibitem{daubechies1990wavelet}
Ingrid Daubechies.
\newblock The wavelet transform, time-frequency localization and signal
  analysis.
\newblock {\em IEEE transactions on information theory}, 36(5):961--1005, 1990.

\bibitem{huang2014hilbert}
Norden~Eh Huang.
\newblock {\em Hilbert-Huang transform and its applications}, volume~16.
\newblock World Scientific, 2014.

\bibitem{chandra2013square}
Kumar Pakki~Bharani Chandra, Da-Wei Gu, and Ian Postlethwaite.
\newblock Square root cubature information filter.
\newblock {\em IEEE Sensors Journal}, 13(2):750--758, 2013.

\bibitem{li2005survey}
X~Rong Li and Vesselin~P Jilkov.
\newblock Survey of maneuvering target tracking. part v. multiple-model
  methods.
\newblock {\em IEEE Transactions on Aerospace and Electronic Systems},
  41(4):1255--1321, 2005.

\bibitem{durbin2012time}
James Durbin and Siem~Jan Koopman.
\newblock {\em Time series analysis by state space methods}.
\newblock Oxford university press, 2012.

\bibitem{wang2015nonlinear}
Xiaoxu Wang, Yan Liang, Quan Pan, Chunhui Zhao, and Feng Yang.
\newblock Nonlinear gaussian smoothers with colored measurement noise.
\newblock {\em IEEE Transactions on Automatic Control}, 60(3):870--876, 2015.

\bibitem{sarkka2010gaussian}
Simo Sarkka and Jouni Hartikainen.
\newblock On gaussian optimal smoothing of non-linear state space models.
\newblock {\em IEEE Transactions on Automatic Control}, 55(8):1938--1941, 2010.

\bibitem{geng2019state}
Hang Geng, Zidong Wang, Yuhua Cheng, Fuad~E Alsaadi, and Abdullah~M Dobaie.
\newblock State estimation under non-gaussian l{\'e}vy and time-correlated
  additive sensor noises: A modified tobit kalman filtering approach.
\newblock {\em Signal Processing}, 154:120--128, 2019.

\bibitem{liang2004finite}
Yan Liang, Dong~Hua Zhou, Quan Pan, et~al.
\newblock A finite-horizon adaptive kalman filter for linear systems with
  unknown disturbances.
\newblock {\em Signal Processing}, 84(11):2175--2194, 2004.

\bibitem{liu2011kernel}
Weifeng Liu, Jose~C Principe, and Simon Haykin.
\newblock {\em Kernel adaptive filtering: a comprehensive introduction},
  volume~57.
\newblock John Wiley \& Sons, 2011.

\bibitem{ma2017robust}
Wentao Ma, Jiandong Duan, Weishi Man, Haiquan Zhao, and Badong Chen.
\newblock Robust kernel adaptive filters based on mean p-power error for noisy
  chaotic time series prediction.
\newblock {\em Engineering Applications of Artificial Intelligence},
  58:101--110, 2017.

\bibitem{richard2009online}
C{\'e}dric Richard, Jos{\'e} Carlos~M Bermudez, and Paul Honeine.
\newblock Online prediction of time series data with kernels.
\newblock {\em IEEE Transactions on Signal Processing}, 57(3):1058--1067, 2009.

\bibitem{han2018multivariate}
Min Han, Shuhui Zhang, Meiling Xu, Tie Qiu, and Ning Wang.
\newblock Multivariate chaotic time series online prediction based on improved
  kernel recursive least squares algorithm.
\newblock {\em IEEE transactions on cybernetics}, (99):1--13, 2018.

\bibitem{hua2013kernel}
Qu~Hua, Ma~Wen-Tao, Zhao Ji-Hong, and Chen Ba-Dong.
\newblock Kernel least mean kurtosis based online chaotic time series
  prediction.
\newblock {\em Chinese Physics Letters}, 30(11):110505, 2013.

\bibitem{zhu2018power}
Yongli Zhu, Renchang Dai, Guangyi Liu, Zhiwei Wang, and Songtao Lu.
\newblock Power market price forecasting via deep learning.
\newblock In {\em IECON 2018-44th Annual Conference of the IEEE Industrial
  Electronics Society}, pages 4935--4939. IEEE, 2018.

\bibitem{liao2018deep}
Shijie Liao, Jing Chen, Jiaxin Hou, Qingyu Xiong, and Junhao Wen.
\newblock Deep convolutional neural networks with random subspace learning for
  short-term traffic flow prediction with incomplete data.
\newblock In {\em 2018 International Joint Conference on Neural Networks
  (IJCNN)}, pages 1--6. IEEE, 2018.

\bibitem{siami2018forecasting}
Sima Siami-Namini and Akbar~Siami Namin.
\newblock Forecasting economics and financial time series: Arima vs. lstm.
\newblock {\em arXiv preprint arXiv:1803.06386}, 2018.

\bibitem{lu2017agent}
David~W Lu.
\newblock Agent inspired trading using recurrent reinforcement learning and
  lstm neural networks.
\newblock {\em arXiv preprint arXiv:1707.07338}, 2017.

\bibitem{de2011forecasting}
Alysha~M De~Livera, Rob~J Hyndman, and Ralph~D Snyder.
\newblock Forecasting time series with complex seasonal patterns using
  exponential smoothing.
\newblock {\em Journal of the American Statistical Association},
  106(496):1513--1527, 2011.

\bibitem{wang2012stock}
Ju-Jie Wang, Jian-Zhou Wang, Zhe-George Zhang, and Shu-Po Guo.
\newblock Stock index forecasting based on a hybrid model.
\newblock {\em Omega}, 40(6):758--766, 2012.

\bibitem{doya2002multiple}
Kenji Doya, Kazuyuki Samejima, Ken-ichi Katagiri, and Mitsuo Kawato.
\newblock Multiple model-based reinforcement learning.
\newblock {\em Neural computation}, 14(6):1347--1369, 2002.

\bibitem{murray1997multiple}
Roderick Murray-Smith and T~Johansen.
\newblock {\em Multiple model approaches to nonlinear modelling and control}.
\newblock CRC press, 1997.

\bibitem{kreyszig1978introductory}
Erwin Kreyszig.
\newblock {\em Introductory functional analysis with applications}, volume~1.
\newblock wiley New York, 1978.

\bibitem{moghe2019adaptive}
Rahul Moghe, Renato Zanetti, and Maruthi~R Akella.
\newblock Adaptive kalman filter for detectable linear time-invariant systems.
\newblock {\em Journal of Guidance, Control, and Dynamics}, pages 1--9, 2019.

\bibitem{mehra1972approaches}
Raman Mehra.
\newblock Approaches to adaptive filtering.
\newblock {\em IEEE Transactions on automatic control}, 17(5):693--698, 1972.

\bibitem{mohamed1999adaptive}
AH~Mohamed and KP~Schwarz.
\newblock Adaptive kalman filtering for ins/gps.
\newblock {\em Journal of geodesy}, 73(4):193--203, 1999.

\bibitem{myers1976adaptive}
Kenneth Myers and BD~Tapley.
\newblock Adaptive sequential estimation with unknown noise statistics.
\newblock {\em IEEE Transactions on Automatic Control}, 21(4):520--523, 1976.

\bibitem{liang2008adaptive}
Yan Liang, Donghua Zhou, Lei Zhang, and Quan Pan.
\newblock Adaptive filtering for stochastic systems with generalized
  disturbance inputs.
\newblock {\em IEEE Signal Processing Letters}, 15:645--648, 2008.

\bibitem{xia1994adaptive}
Qijun Xia, Ming Rao, Yiqun Ying, and Xuemin Shen.
\newblock Adaptive fading kalman filter with an application.
\newblock {\em Automatica}, 30(8):1333--1338, 1994.

\bibitem{li2003survey}
X~Rong Li and Vesselin~P Jilkov.
\newblock Survey of maneuvering target tracking. part i. dynamic models.
\newblock {\em IEEE Transactions on aerospace and electronic systems},
  39(4):1333--1364, 2003.

\bibitem{roger1994topics}
Horn Roger and R~Johnson Charles.
\newblock {\em Topics in matrix analysis}.
\newblock Cambridge University Press, 1994.

\bibitem{kalman1961new}
Rudolph~E Kalman and Richard~S Bucy.
\newblock New results in linear filtering and prediction theory.
\newblock {\em Journal of basic engineering}, 83(1):95--108, 1961.

\bibitem{anderson1971stability}
BDO Anderson.
\newblock Stability properties of kalman-bucy filters.
\newblock {\em Journal of the Franklin Institute}, 291(2):137--144, 1971.

\bibitem{anderson2012optimal}
Brian~DO Anderson and John~B Moore.
\newblock {\em Optimal filtering}.
\newblock Courier Corporation, 2012.

\bibitem{kim2000robust}
J-H Kim and J-H Oh.
\newblock Robust state estimator of stochastic linear systems with unknown
  disturbances.
\newblock {\em IEE Proceedings-Control Theory and Applications},
  147(2):224--228, 2000.

\bibitem{yong2016unified}
Sze~Zheng Yong, Minghui Zhu, and Emilio Frazzoli.
\newblock A unified filter for simultaneous input and state estimation of
  linear discrete-time stochastic systems.
\newblock {\em Automatica}, 63:321--329, 2016.

\bibitem{diniz2010digital}
Paulo~SR Diniz, Eduardo~AB Da~Silva, and Sergio~L Netto.
\newblock {\em Digital signal processing: system analysis and design}.
\newblock Cambridge University Press, 2010.

\bibitem{li2017approximate}
Tian-cheng Li, Jin-ya Su, Wei Liu, and Juan~M Corchado.
\newblock Approximate gaussian conjugacy: parametric recursive filtering under
  nonlinearity, multimodality, uncertainty, and constraint, and beyond.
\newblock {\em Frontiers of Information Technology \& Electronic Engineering},
  18(12):1913--1939, 2017.

\bibitem{ridolfi2018experimental}
Matteo Ridolfi, Stef Vandermeeren, Jense Defraye, Heidi Steendam, Joeri Gerlo,
  Dirk De~Clercq, Jeroen Hoebeke, and Eli De~Poorter.
\newblock Experimental evaluation of uwb indoor positioning for sport postures.
\newblock {\em Sensors}, 18(1):168, 2018.

\bibitem{karsaz2009optimal}
Ali Karsaz and Hamid Khaloozadeh.
\newblock An optimal two-stage algorithm for highly maneuvering targets
  tracking.
\newblock {\em Signal processing}, 89(4):532--547, 2009.

\end{thebibliography}

% that's all folks
\end{document}